\newtheorem{thm}{Theorem}
\newtheorem{lem}{Lemma}
\begin{document}

\markboth{A. Eslami et al.}{Results on Finite Wireless Sensor Networks: Connectivity and Coverage}

\title{Results on Finite Wireless Sensor Networks: Connectivity and Coverage}
\author{Ali Eslami, Mohammad Nekoui, and Hossein Pishro-Nik
\affil{University of Massachusetts, Amherst}
Faramarz Fekri
\affil{Georgia Institute of Technology}
}

\begin{abstract}
Many analytic results for the connectivity, coverage, and capacity of wireless networks have been
reported for the case where the number of nodes, $n$, tends to infinity (large-scale networks). The majority of
these results have not been extended for small or moderate values of $n$; whereas in many practical networks,
$n$ is not very large. In this paper, we consider finite (small-scale) wireless sensor networks.
We first show that previous asymptotic results provide poor
approximations for such networks. We provide a set of differences between small-scale and large-scale analysis and propose a methodology for analysis of finite sensor networks. Furthermore, we consider two models for such networks: unreliable sensor
grids, and sensor networks with random node deployment. We provide easily computable expressions for bounds on the coverage and
connectivity of these networks. With validation from simulations, we show that the derived analytic expressions give very
good estimates of such quantities for finite sensor networks. Our investigation confirms the fact that
small-scale networks possesses unique characteristics different from the large-scale counterparts, necessitating the development of a new framework for their analysis and design.
\end{abstract}

\category{C.2.0}{Computer-Communication Networks}{General}

\terms{Data Communications}

\keywords{Wireless sensor networks, finite networks, unreliable grids, random geometric graphs,
 connectivity, coverage}

\acmformat{Eslami, A., Nekoui, M., Pishro-Nik, H., Fekri, F.  2011. Analysis of connectivity and coverage for finite wireless sensor networks.}

\begin{bottomstuff}
The material in this paper was presented in part at the 5th Annual IEEE Conference on Sensor, Mesh and Ad Hoc Communications and Networks, June 2008, and the 4th IEEE International Symposium on Modeling and Optimization in Mobile, Ad Hoc and Wireless Networks, April 2006.

This work is supported by the National Science Foundation, under
grants CCF-0728970 and ECCS-0636569.

Author's addresses: A. Eslami, M. Nekoui, and H. Pishro-Nik, Department of Electrical and Computer Engineering,
University of Massachusetts, Amherst, email:\{eslami, nekoui, pishro@ecs.umass.edu\}; F. Fekri, School of Electrical and Computer Engineering, Georgia Institute of Technology, email:fekri@ece.gatech.edu.
\end{bottomstuff}

\maketitle

\section{Introduction}

In the past, many analytic results on the connectivity, coverage,
and capacity of wireless ad-hoc and sensor networks have been
obtained. In almost all of the results, it is assumed that the
number of nodes, $n$, in the network tends to infinity (large-scale
networks). In other words, these results are asymptotic.
Asymptotic results are very important for two reasons. First, they give us good
estimates for large-scale networks. Second, they show some
fundamental trade-offs in the network. However, in many practical
wireless networks the number of nodes may be limited to a few
hundreds (small-scale/finite networks). As it is shown in this paper,
the asymptotic results cease to be valid for these networks. Thus,
it is very crucial from the practical point of view to analyze finite
networks. These analytic results will essentially help us to
understand, design, and analyze practical wireless networks, and
also to design more suitable communication protocols.

For example, consider the capacity analysis of
wireless networks which has been studied extensively (e.g., in
\cite{maccapacity,guptacapacity2000,mobility01,gupta03,grid:mobicom01,Perevalov03,hybrid03}). Today we have a good
understanding of scaling laws for the capacity of wireless networks.
However, suppose we need to design a wireless sensor network
consisting of an arbitrary deployment of a hundred sensor nodes. Some fundamental questions are
as follows. What is the transport capacity? What are the connectivity and coverage probabilities of such networks? How do network parameters
such as the communication radius of nodes, number of nodes, and so on,
affect these properties? Unfortunately, the available asymptotic
results fail to give answers to these questions.

To address the aforementioned issues in small-scale networks we need to address some inherent problems. First,
in large-scale networks we use asymptotic estimates that make the analysis much simpler. These estimates are
not available in small-scale analysis. Thus, small-scale analysis is usually more difficult. Second, even if we
can perform the small-scale analysis, we usually obtain very complicated formulas that are not very useful
practically. In this paper, we want to circumvent these problems and provide bounds for small
scale-analysis. In particular, we are looking for easily computable but acceptable estimates for fundamental
network quantities.  The main goal of this paper is to initiate the small-scale analysis of wireless sensor networks. To the best of our knowledge, this is the first work to analytically and systematically study
this issue.

The main idea is the following. The first key point is to aim at simple and very good approximations instead of
trying to find complicated exact formulas. To do so, we first consider the asymptotic analysis. Some of the estimates in the asymptotic analysis are still good for
small-scale networks, while others are not. We identify those which are not valid and replace them with better
estimates.  However, this must be done carefully, in order to obtain simple and easily computable formulas at
the end. Specifically, in this paper we list a few important differences between small-scale and large-scale
analysis.

As a special case of finite sensor networks, we first study unreliable sensor grids in which the sensors are
deployed in a grid and each sensor is active with probability $p$. This probability is used to account for both
sensor failures and sleeping sensors. A fundamental question is that given an area to be protected, how many
sensors should be deployed so that every point in the region is covered by at least one sensor (more generally,
we may require that every point in the region is covered by at least $k$ sensors). Equivalently, one can ask if
$n$ sensor nodes are deployed in an area, what should be the sensing radius of nodes to ensure coverage (or
$k$-coverage)?  The same question can be repeated for other network properties such as connectivity and
diameter. In this paper, we study the behavior of the different parameters in finite sensor grids. We prove that
all graph theoretic properties of these networks such as connectivity, network diameter and capacity, follow a
piecewise constant behavior and this is even true for the coverage which is not a graph-theoretic property. This
result shows a key difference between the behavior of sensor grids and randomly deployed sensor networks, and
has some important implications from the practical point of view: 1. It shows that increasing the communication
and sensing radii does not necessarily improve coverage, connectivity or any other graph-theoretic property,
2. It suggests that we can completely determine the behavior of a vast class of network properties by knowing
their values for only a finite number of points. We then find simple lower and upper bounds for the k-coverage
probability of sensor grids and show that these bounds are adequately close to the real value, as an estimate of
the coverage probability.

Next, we consider finite sensor networks in which nodes are randomly distributed in the unit square. We study
$k$-connectivity and coverage of these networks. We give several results pertaining to these properties.
We first show that the previous asymptotic results on coverage and $k$-connectivity are not accurate for the
finite case. We then provide a very simple formula for the $k$-connectivity probability of finite sensor networks and show that the formula is very precise. We also study the coverage probability of random networks where we prove simple lower and upper bounds for the coverage probability.

The remainder of the paper is organized as follows: Next section provides an overview of the related work.
In Section \ref{sec:fund}, we study connectivity and coverage of finite sensor grids.
Section \ref{sec:general} investigates the
fundamental properties of random sensor networks such as connectivity and coverage. Finally, Section
\ref{sec:conclusion} concludes the paper.

\subsection{Related Work} \label{sec:related}
Related problems have been studied in the context of random graph theory \cite{RandomGraphs}, continuum
percolation and geometric probability \cite{continuumpercolation,Randomgeographspenrose}, and the study of
wireless network graphs \cite{gupta98critical,guptacapacity2000,kumar2004neighbors,Franceschetti2003covering,Franceschettiunreliable,unreliable2003sensor,Dubhashisubmitted,Fault03,asymptotic04,kcoverage}. In random
graph theory, the model $G(n,p)$ is extensively studied, in which edges appear in a graph of $n$ vertices with
probability $p$ independent of each other. In continuum percolation theory, usually infinite graphs on
$\mathbb{R}^d$ are studied. Finally, in geometric probability and the study of graphs of wireless networks,
large-scale graphs over the plane are usually studied.

In \cite{frances,francescritical}, the authors studied connectivity and critical node life-time for a model of random networks in which the density of nodes is kept constant while the area of interest tends to infinity. Furthermore, the throughput scaling of wireless relay networks is studied in \cite{throuscale} for this model. However, the results in these papers are all based on asymptotic analyses and their method can not be applied to the case of finite networks, i.e. networks with finite number of nodes (e.g. less than 1000) on a finite plane. In the analysis of these networks, boundary effects and constant factors (see section \ref{sec:need}) cannot be neglected as can be for the case of asymptotic analysis.

The connectivity and $k$-connectivity of large-scale wireless networks have been investigated in \cite{gupta98critical}, \cite{Fault03}, \cite{asymptotic04}, \cite{mysecon}, and \cite{Thiran02}.
In
\cite{convscap04}, the trade-off between connectivity and capacity of
dense networks was examined. The transport, information theoretic, and MAC layer capacities have extensively been investigated (see for example
\cite{maccapacity,guptacapacity2000,mobility01,gupta03,grid:mobicom01,Perevalov03,hybrid03}.
The grid model for sensor networks has also been investigated. In particular, connectivity, coverage, and
diameter of sensor grids were studied in \cite{unreliable2003sensor}. In \cite{kcoverage2} and \cite{covdim}, the $k$-coverage problem for sensor grids and other deployment methods was considered.
The authors in \cite{balister09trap,balister09random} also studied coverage for sensor networks in presence of failures and placement errors.
However, almost all previous analytical results are asymptotic since they consider large-scale networks.

Analysis of wireless networks with modest number of nodes has generated a lot of interest
in the recent past \cite{bai06,connectivity_commletters_02,connectivity_commletters_06,rand_mob_wireess_wiopt06,link_prob_mass_04,finiteonedim06}. In \cite{connectivity_commletters_02}, the authors investigated the
problem of connectivity for one-dimensional
networks (line networks). Using probabilistic methods, they obtained the exact
formulation for the probability of connectivity. The author of \cite{connectivity_commletters_06}
presented corrections and extensions to
\cite{connectivity_commletters_02}. It is noted that both of the
above cases considered a line network, and the extension to
two-dimensional networks was achieved by obtaining a loose bound
using the results from the former case. In \cite{finiteonedim06},
the authors also consider the line network and obtain connectivity
results for one-dimensional networks.
The threshold phenomena for finite wireless networks on a line is studied in \cite{eslamitcomfinite10}. The authors also find lower and upper bounds on the MAC-layer capacity for such networks.
It should be noted that the main challenges
in finite analysis arise in the two dimensional case. In
\cite{rand_mob_wireess_wiopt06}, mobility and more realistic models
were examined. The authors obtained results on the connectivity for
both finite and asymptotic cases in one-dimensional networks. In
\cite{link_prob_mass_04}, some simple local network characteristics
such as the link probability (occurrence of a link) and average node degrees are studied. The
paper also obtains formulas for the average covered area.
In \cite{balister07}, connectivity and coverage are studied for networks on a thin strip of finite length. The authors provide reliable density estimates for achieving coverage and connectivity, assuming a Poisson distribution for the nodes.

\section{Fundamental Properties of Finite unreliable Sensor Grids} \label{sec:fund}

In this section we present properties of finite unreliable sensor grids. In particular, we prove that a large
class of network properties such as connectivity, coverage, and capacity can be represented as a piecewise
constant function of the communication and sensing radii, $r_t$ and $r_s$, respectively. We also discuss the
implications of this result and show the importance of boundary effects in finite networks. We then find an
upper bound for coverage which can be used to approximate the exact value of the coverage.


Here, we consider the sensor network model introduced in \cite{unreliable2003sensor}. In particular, it is
assumed that $n$ sensor nodes are arranged in a grid over a square region of unit area. This region is called
the deployment region and it is assumed to be the unit square centered at the origin.
Such a grid is depicted in Fig. \ref{fig:SenGrid}.
We show the deployment region by $S_0$. The separation between adjacent nodes is assumed to be $\frac{1}{\sqrt{n}}$ units. Each sensor
node can detect events within some distance from it, called the sensing radius $r_s$. Each sensor is active with
probability $p$ independently from other nodes. The transmission radius of each node is assumed to be $r_t$. In
other words, if the distance between two sensor nodes $u$ and $v$ is less than $r_t$, then they can communicate
with each other, thus the edge $\{u,v\}$ belongs to edges of the graph. It is worth noting that our results
apply to any deterministic placement of finite sensor networks and also any finite deployment region with smooth
boundaries. However, for simplicity, we consider the above grid model in this paper. We are interested in
connectivity and coverage. In particular, we assume $p_{disc}(n,p,r_t)$ is the probability that the sensor grid
with parameters $n$, $p$, and $r_t$ constructs a disconnected graph. We also assume that $p_{cov}(n,p,r_s,k)$ is
the probability that each point of the unit square (the deployment region) is covered by at least $k$ sensors in
the sensor grid with parameters $n$, $p$, and $r_s$. Thus $p_{cov}(n,p,r_s,1)$ is the probability that the whole
unit quare is covered by the sensor nodes.

\begin{figure}[t]
\centering
\includegraphics[width=2.7 in, height=2.5 in]{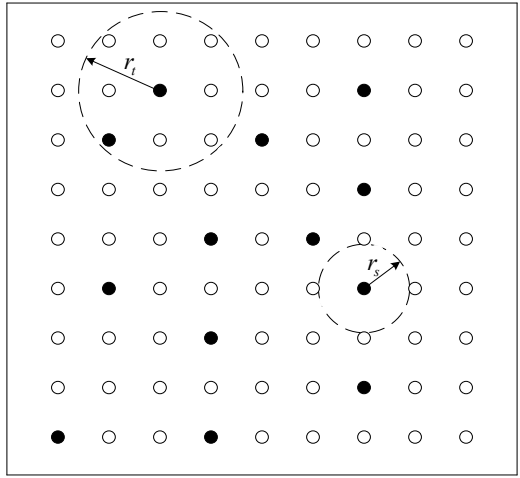}
\caption{A sensor grid is shown where the active sensors are shown by black nodes. communication and sensing radii, $r_t$ and $r_s$, are also shown.}
\label{fig:SenGrid}
\end{figure}

\subsection{Sensor Grids: Asymptotic versus Finite Analysis} \label{sec:needgrid}
We now present some evidence to show that previous asymptotic
results diverge significantly from actual values for finite grids.
To show this, we consider connectivity and coverage. Let us first
consider coverage. The asymptotic coverage probability
$p_{cov}(n,p,r_s,k)$ has been found in \cite{kcoverage2}. In particular
the following fundamental result has been obtained in
\cite{kcoverage2}.
\begin{thm} \label{KLB}(Kumar, Lai, and Balogh 2008)
Let $\epsilon$ be an arbitrary constant positive real number and $k$
be a constant positive integer. Then for $n$ chosen large enough we have the following two cases.
\begin{itemize}
   \item If $r_s(n) \geq \sqrt{\frac{(1+\epsilon) \log (np)}{\pi np}}$,
   then the unit square is almost always $k$-covered completely, i.e.,
    $p_{cov}(n,p,r_s,k)=1-o(1)$.
    \item If $r_s(n) \leq \sqrt{\frac{(1-\epsilon) \log (np)}{\pi np}}$, then
    $p_{cov}(n,p,r_s,k)=o(1)$.
\end{itemize}
\end{thm}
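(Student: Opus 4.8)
The plan is to analyze the local ``coverage count'' at a single point and then convert pointwise statements into statements about the whole square: discretization for the positive direction, and a second-moment argument for the negative direction. For a fixed $x$ in the unit square, let $N(x)$ be the number of active sensors whose sensing disk contains $x$, equivalently the number of active grid nodes within distance $r_s$ of $x$. Since the grid has node density $n$ per unit area, the number of grid nodes in the disk $B(x,r_s)$ is $\pi r_s^2 n(1+o(1))$ away from the boundary, and each is active independently with probability $p$; hence $N(x)$ is binomial with mean $\mu(x)=\pi r_s^2 n p(1+o(1))$. Substituting the two threshold radii gives $\mu(x)=(1\pm\epsilon)\log(np)\,(1+o(1))$ (and the regime of interest forces $np\to\infty$, so $\mu(x)\to\infty$ while $r_s\to 0$). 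The whole problem thus reduces to controlling the lower tail $\Pr[N(x)\le k-1]$. A Poisson approximation to the binomial is justified and yields $\Pr[N(x)\le k-1]=\Theta\!\big(e^{-\mu(x)}\mu(x)^{k-1}/(k-1)!\big)$, which is $O\big((np)^{-(1+\epsilon)}(\log np)^{k-1}\big)$ in the super-critical case and $\Omega\big((np)^{-(1-\epsilon)}(\log np)^{k-1}\big)$ in the sub-critical case. The essential feature is that the factor $(np)^{\mp\epsilon}$ dominates every polylogarithmic and polynomial overhead that the later bounds introduce.

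For the first (positive) case I would establish full $k$-coverage by discretization. Tile the square with a grid of test points of spacing $\delta\ll r_s$, and observe that if every test point is $k$-covered by disks of the slightly shrunk radius $r_s-\delta$, then by the triangle inequality every point of the square is $k$-covered by the true disks of radius $r_s$. A union bound over the $O(\delta^{-2})$ test points, each failing with probability $O\big((np)^{-(1+\epsilon)}\mathrm{polylog}\big)$, makes the expected number of failures $o(1)$ once $\delta$ is taken polynomially small in $n$, since the $\epsilon$ gain absorbs the $\delta^{-2}$ overhead. Markov's inequality then gives $p_{cov}(n,p,r_s,k)=1-o(1)$.

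For the second (negative) case I would apply the second moment method to a sparse family of candidate points. Select $M=\Theta(r_s^{-2})=\Theta\big(np/\log(np)\big)$ test points pairwise separated by more than $2r_s$, so the events ``candidate $i$ is not $k$-covered'' depend on disjoint sets of sensors and are mutually independent. Letting $X$ count the uncovered candidates, the tail bound gives $\mathbb{E}[X]=\Omega\big((np)^{\epsilon}\big)\to\infty$, while independence gives $\mathrm{Var}(X)\le\mathbb{E}[X]$, so Chebyshev's inequality yields $\Pr[X=0]\to 0$. Hence some point is uncovered with high probability and $p_{cov}(n,p,r_s,k)=o(1)$.

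The hardest part will be the boundary. Points near the edges and corners of the unit square see only a half- or quarter-disk of grid nodes, so their mean coverage count is reduced by a factor of $1/2$ or $1/4$; these points are therefore the most difficult to cover and genuinely control the threshold. I would handle them by recomputing $\mu(x)$ with the correct boundary-intersected area and checking that the same $(np)^{\pm\epsilon}$ separation survives, and in the negative direction by placing a constant fraction of the candidate points along the boundary, where failure is most likely. Keeping the discretization error, the Poisson-approximation error, and the boundary correction simultaneously below the $(np)^{\mp\epsilon}$ margin is the delicate bookkeeping on which the entire argument hinges.
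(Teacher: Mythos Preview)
The paper does not prove this theorem at all: it is quoted verbatim from Kumar, Lai, and Balogh \cite{kcoverage2} and is used only as a benchmark to show that asymptotic results are inaccurate for small $n$. There is therefore no ``paper's own proof'' to compare against.

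That said, your proposal is essentially the standard route to such a threshold result and is the one taken in \cite{kcoverage2}. Discretization plus a union bound for the super-critical direction and independent, $2r_s$-separated test points for the sub-critical direction are exactly the right tools; in fact the paper's own finite-$n$ bounds in Theorems~\ref{thm:lowercovgrid}, \ref{thm:uppercov}, and \ref{th:lowcov}--\ref{th:uppcovrand} are built on the same two mechanisms (the virtual grid $L$ and the packing of disjoint disks in Figure~\ref{fig:upperboundcov}). Your identification of the boundary as the delicate point is correct, and the bookkeeping you describe goes through: with spacing $\delta=r_s/\log(np)$, the edge strip contributes $O\big(\delta^{-1}\big)=O\big((np\log(np))^{1/2}\big)$ test points with failure probability $(np)^{-(1+\epsilon)/2}$ each, and the four corner regions contribute $O((\log np)^2)$ test points with failure probability $(np)^{-(1+\epsilon)/4}$ each, so both contributions are $o(1)$. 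One minor remark: in the sub-critical direction independence of the $2r_s$-separated events already gives $\Pr[X=0]=\prod_i(1-q_i)\le e^{-\mathbb{E}[X]}\to 0$, so the second-moment step is not needed.
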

Using simulations, authors of \cite{kcoverage2} have shown that this
theorem results in accurate estimation of $p_{cov}(n,p,r_s,k)$, when
$n$ is large (say $n>10000$). Thus the theorem is very useful in the
design of large-scale sensor networks. Let us now consider a sensor
grid consisting of $100$ unreliable sensor nodes with $p=0.2$. If we
want to use the asymptotic result for this network, choosing
$\epsilon=0.1$, we conclude that if $r\geq 0.229$ then
$p_{cov}(n,p,r_s,k)\approx 1$ and if $r\leq 0.207$ then
$p_{cov}(n,p,r_s,k)\approx 0$. We have used exhaustive simulations
to obtain an accurate estimate of $p_{cov}(n,p,r_s,k)$. In
Figure \ref{fig:coverageasymp}, we compare the results obtained by
exhaustive simulations and Theorem \ref{KLB}. It is observed that
the two results differ considerably. For example, at $r_s=0.25$, the
asymptotic result predicts that the unit square is covered with
probability close to one. However, simulations show that this
probability is only $p_{cov}(n=100,p=0.2,r=0.25,k=1)=0.018$. It is
clear that for this network the asymptotic analysis cannot provide
results that are sufficiently accurate.
Figure \ref{fig:coverageasympn100p2k2} shows that the same situation exists when we consider $k$-coverage for $k>1$. Thus, it is very important to provide finite-size analysis. We also observe that the coverage
probability obtained by simulations shows several discontinuities.
We prove this phenomenon in the section \ref{sec:piecewise}. We performed many simulations for different values of $n$, $p$, and $k$ to further validate
the insufficiency of asymptotic results. However, we omit them for brevity.

\begin{figure}[t]
\centering
\includegraphics[width=4 in, height=3 in]{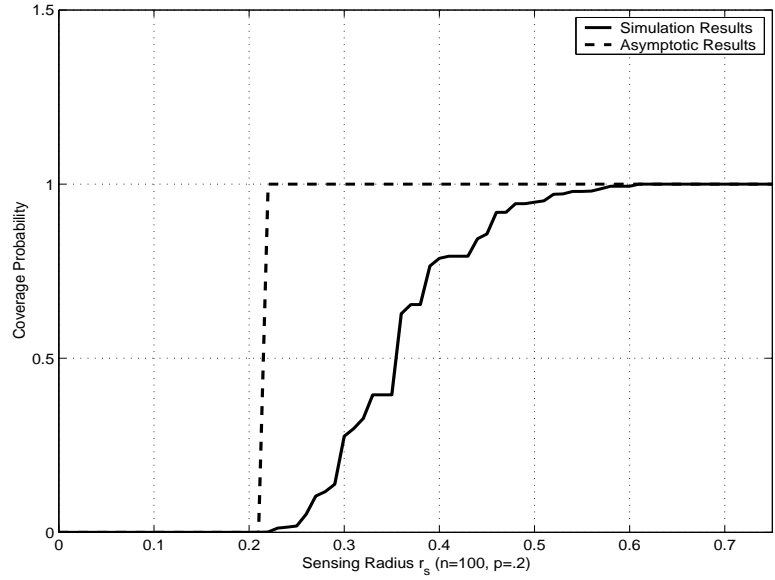}
\caption{Comparison of asymptotic results and the simulation results for the coverage probability of a sensor grid with $p=0.20$. For simulations we picked $n=100$ while the asymptotic result hold when $n$ tends to infinity.}
\label{fig:coverageasymp}
\end{figure}

\begin{figure}[t]
\centering
\includegraphics[width=4 in, height=3 in]{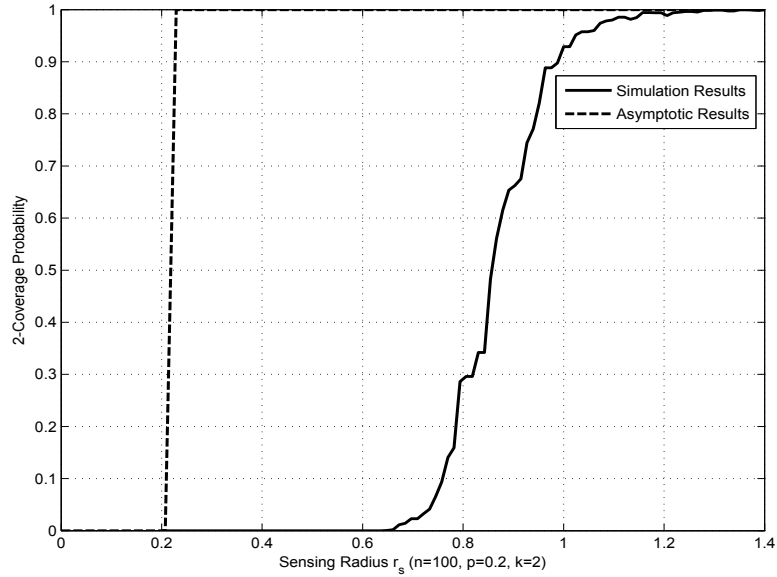}
\caption{Comparison of asymptotic results and the simulation results for 2-coverage probability ($k$=2) of a sensor grid with $p=0.20$. For simulations we picked $n=100$ while the asymptotic result hold when $n$ tends to infinity.}
\label{fig:coverageasympn100p2k2}
\end{figure}

\subsection{Discontinuity in Properties of Sensor Grids}\label{sec:piecewise}
Here we prove that a
vast class of network properties can be represented by piecewise constant
functions of $r_t$ and $r_s$. We stress that the piecewise property is one of the key differences between sensor grids
(deterministic deployment) and randomly deployed sensor networks.  Consider a right-continuous function
$f(x):[0,\infty]\rightarrow \mathbb{R}$. The function $f$ is said to be
piecewise constant if there exists a set of real numbers $0=x_1
<x_2<x_3<...$, and $c_1$, $c_2$, ..., such that $f(x)=c_i$ for all
$x \in [x_i,x_{i+1})$. In this paper we only deal with functions for
which the number of $x_i$'s is finite.

Let $Q$ be a property for sensor grids such as coverage, i.e., we say
that a grid has the property $Q$ if it covers the deployment region.
Coverage is an example of \textit{geometric properties}. Another
category of properties are \emph{graph theoretic properties} such as
connectivity. In particular, any sensor grid with parameters $n$,
$p$, and $r_t$ corresponds to a graph that can be shown by $g(V,E)$ where $V$ and $E$ are the set of vertices and edges in the graph, respectively.
The sensor nodes construct $V$, the set of vertices of the graph.
There exists an edge between two vertices if their corresponding
sensors are within the communication range of each other. Any
property of $g(V,E)$ is a graph theoretic property for the sensor
grid. Thus, two different sensor grids will have the same graph
theoretic properties if they have isomorphic (identical) graphs.
We note that coverage is not a graph theoretic property.

Let $\mathcal{X}=\{X_1,X_2,...,X_k\}$ be a set of points on the plane.
Define $g(\mathcal{X},r_t)$ as the graph obtained by the following method.
The vertices of $g$ are the points in $\mathcal{X}$ and there is an
edge between two vertices $X_i$ and $X_j$, if their distance is less
than or equal to $r_t$. We prove the following theorem.

\begin{thm}\label{thm:piecegraph}
Let $Q$ be a graph theoretic property of sensor grids with
parameters $n$, $p$, $r_t$. Let $n$ and $p$ be fixed numbers and
$p_{Q}(r_t)$ be the probability that the sensor grid with
communication radius $r_t$ has the property $Q$. Then $p_Q$ is a
piecewise constant function. In particular, there exist
$0=r_0<r_1<r_2<...<r_m \leq \frac{1}{\sqrt{2}}$, and $c_1, c_2,...
,c_{m}$ such that $p_{Q}(r_t)=c_i$ if $r_t \in [r_{i-1},r_i)$.
\end{thm}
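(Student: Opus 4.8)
The plan is to exploit the fact that, unlike a randomly deployed network, the vertex positions of a sensor grid are deterministic: the $n$ nodes sit at fixed grid points, so the multiset of pairwise distances between them is a fixed finite set of real numbers. Write the distinct values as $0<d_1<d_2<\cdots<d_L$. The communication radius $r_t$ enters the model only through the rule that an edge $\{u,v\}$ is present iff the (fixed) distance between $u$ and $v$ is $\le r_t$. Hence the edge set can change only as $r_t$ sweeps past one of the values $d_1,\dots,d_L$, and these are the only candidate breakpoints.

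The key step is to condition on the randomness before letting $r_t$ vary. The only source of randomness is the active/inactive status of the nodes, i.e.\ a point $\omega$ in the finite probability space $\Omega=\{0,1\}^n$ with $\Pr[\omega]=p^{\,a(\omega)}(1-p)^{\,n-a(\omega)}$, where $a(\omega)$ is the number of active nodes. Fix $\omega$. For any $r_t$ in a half-open interval $[d_k,d_{k+1})$ (with the conventions $d_0=0$ and $d_{L+1}=\infty$), the set of node pairs at distance $\le r_t$ equals the set at distance $\le d_k$, independent of the particular $r_t$ in this interval. Consequently the graph $g$ associated with $\omega$ is literally the same throughout $[d_k,d_{k+1})$, so the indicator $\mathbf{1}\{g\text{ has }Q\}$ is constant on each such interval. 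The ``edge iff distance $\le r_t$'' convention is what places the jumps at the left endpoints $d_k$ and makes everything right-continuous, matching the definition of piecewise-constant used here; and since $Q$ is a graph-theoretic property, its truth value depends on $(\omega,r_t)$ only through this graph.

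Averaging over the finite space gives $p_Q(r_t)=\sum_{\omega\in\Omega}\Pr[\omega]\,\mathbf{1}\{g(\omega,r_t)\text{ has }Q\}$, a finite linear combination of functions each constant on every $[d_k,d_{k+1})$; hence $p_Q$ is itself constant on each such interval. Taking $\{r_0,r_1,\dots\}$ to be $\{0\}$ together with those $d_k$ at which $p_Q$ actually jumps (or, more crudely, all of $d_1,\dots,d_L$) yields the claimed representation $p_Q(r_t)=c_i$ for $r_t\in[r_{i-1},r_i)$ with finitely many pieces.

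Finally I would locate the last breakpoint. Once $r_t$ is at least the largest inter-node distance $d_L$, every pair of active nodes is within range, so for every $\omega$ the graph is complete on the active set and $p_Q$ can no longer change; hence all breakpoints lie in $\{d_1,\dots,d_L\}$ and $r_m\le d_L$. I expect this last step — rather than the piecewise-constant argument, which is immediate once the positions are seen to be fixed — to be where the real care lies: one must verify the explicit upper bound $\tfrac{1}{\sqrt2}$ claimed in the statement from the geometry of the centered unit-square grid, and make sure the half-open intervals and the ``$\le r_t$'' edge convention are aligned with the paper's right-continuous definition of a piecewise-constant function. The conceptual heart is that determinism of the node positions collapses the continuum of radii into finitely many distance thresholds.
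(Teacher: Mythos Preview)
Your proposal is correct and follows essentially the same route as the paper: both arguments identify the finite set of pairwise inter-node distances as the only possible breakpoints, observe that for every fixed activation pattern the graph (and hence the truth of $Q$) is unchanged on each half-open interval between consecutive distances, and then sum over the finitely many activation patterns to conclude that $p_Q$ is piecewise constant and right-continuous. Your caution about the exact value of the final breakpoint is well placed, since the paper's own proof simply takes $r_i=d_i$ without separately justifying the specific bound $\tfrac{1}{\sqrt{2}}$.
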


\begin{proof}
Let $\mathcal{X}=\{X_1,X_2,...,X_k\}$ be the set of points in the sensor grid.
Let also $\mathcal{X}_a \subset \mathcal{X}$ be the set of active
sensors. Assume that $g(\mathcal{X},r_t)$ is the corresponding graph.
Let $p(\mathcal{X}_a)$ be the probability that $\mathcal{X}_a$ is the set of active sensors, then we have
\begin{equation}\label{}
p(\mathcal{X}_a)=p^{\mid p(\mathcal{X}_a)\mid} (1-p)^{n-\mid
p(\mathcal{X}_a)\mid},
\end{equation}
where $\mid p(\mathcal{X}_a)\mid$ is the number of active sensor
nodes. Then
\begin{equation}\label{equ:pq}
p_{Q}(r_t)= \sum_{g(\mathcal{X}_a,r_t)  \textrm{ has } Q}
p(\mathcal{X}_a).
\end{equation}
It suffices to find $0=r_0<r_1<r_2<...<r_m< \frac{1}{\sqrt{2}}$
 such that the network graphs $g(\mathcal{X}_a,r_t)$ remain constant for $r_t \in [r_{i-1}, r_i)$ for any choice of
 $\mathcal{X}_a$ and any $i \in \{1,2,...,m\}$. Let $D=\{d_1,d_2,...
 ,d_l \}$ be the set of distances between the points in
 $\mathcal{X}$, and assume that
 $0<d_1<d_2<...<d_l \leq \frac{1}{\sqrt{2}}$. In our grid model we have $d_i=\sqrt{\frac{i}{n}}$, $i=1,...,l$. Then, the network graph remains the same when $r_t \in [d_i,
 d_{i+1})$ for any $i \in \{1,2,...,m\}$. This is because changing
 $r_t$ within $[d_i,d_{i+1})$ will not add or remove any edges. This
 means that we can choose $r_i=d_i$. Thus $p_{Q}$ in (\ref{equ:pq}) remains constant for $r_t \in [r_{i-1}, r_i)$.
 It is also easy to see that $p_{Q}$ is right-continuous because the edges in the graphs are formed when
 the distance between two nodes is less than or equal to $r_t$. This
 completes the proof.
\end{proof}

Note that the above discussion shows that any graph theoretic
quantity is a piecewise constant function of $r_t$. This includes
diameter of the network, MAC layer capacity \cite{maccapacity}, $k$-connectivity,
etc. We now prove that coverage probabilities are piecewise constant
functions of sensing radius. Note that this cannot be concluded from Theorem \ref{thm:piecegraph},
since coverage is not a graph theoretic property.

\begin{thm}\label{thm:piececoverage}
Consider a sensor grid with parameters $n$, $p$, $r_s$. Let $n$ and $p$ be fixed
numbers. Then $p_{cov}(n,p,r_s,k)$ is a piecewise constant function of $r_s$. In particular, there exist
$0=r_0<r_1<r_2<...<r_m \leq \frac{1}{\sqrt{2}}$, and $c_1, c_2,...
,c_{m}$ such that $p_{cov}(n,p,r_s,k)=c_i$ if $r_s \in [r_{i-1},r_i)$.
\end{thm}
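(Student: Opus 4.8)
The plan is to mirror the structure of the proof of Theorem~\ref{thm:piecegraph}, but to replace the graph-theoretic discretization (distances between pairs of nodes) with the geometric discretization relevant to coverage. The key observation is that $k$-coverage of the unit square by a fixed set of active sensors is itself a monotone, right-continuous event in $r_s$: as $r_s$ increases, each sensor's disk only grows, so coverage can only be gained, never lost. The crux is therefore to show that, for any fixed configuration $\mathcal{X}_a$ of active sensors, the set of radii $r_s$ at which the whole square becomes $k$-covered is of the form $[r^*,\infty)$ for some threshold $r^*$, and that these thresholds take only finitely many values as $\mathcal{X}_a$ ranges over the (finitely many) subsets of the grid.

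First I would fix $n$ and $p$ and write, exactly as in~(\ref{equ:pq}),
\begin{equation}\label{equ:pcovsum}
p_{cov}(n,p,r_s,k)=\sum_{\mathcal{X}_a \textrm{ } k\textrm{-covers } S_0}
p(\mathcal{X}_a),
\end{equation}
where the sum is over active-sensor sets $\mathcal{X}_a$ whose disks of radius $r_s$ achieve $k$-coverage of the deployment region $S_0$. Since there are only $2^n$ possible sets $\mathcal{X}_a$, it suffices to show that each indicator ``$\mathcal{X}_a$ $k$-covers $S_0$'' is a right-continuous, piecewise constant (indeed, nondecreasing step) function of $r_s$ with a single jump, and then to collect the jump points across all $\mathcal{X}_a$ into one finite sorted list $0=r_0<r_1<\cdots<r_m$. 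On each interval $[r_{i-1},r_i)$ every indicator is constant, so the finite weighted sum~(\ref{equ:pcovsum}) is constant as well.

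The main obstacle—and the one genuinely new ingredient relative to Theorem~\ref{thm:piecegraph}—is identifying the finite set of candidate threshold radii and proving that the critical radius $r^*(\mathcal{X}_a)$ always lies in it. For the graph case the thresholds were simply the pairwise distances $d_i=\sqrt{i/n}$; for coverage the threshold is determined by the geometry of the \emph{hardest point to cover}. The plan is to argue that for a fixed finite set of disk centers, the $k$-coverage threshold is attained at a point $P\in S_0$ that is equidistant from several centers or lies on the boundary of $S_0$: as $r_s$ decreases past the critical value, the first point to lose $k$-coverage is one where the $k$-th nearest relevant center is exactly at distance $r_s$, and such candidate ``witness'' points form a finite set (intersections of perpendicular bisectors, and intersections of these with the boundary $\partial S_0$). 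Consequently $r^*(\mathcal{X}_a)$ equals the distance from one of finitely many geometrically distinguished points to one of the grid centers, so only finitely many values can occur.

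Finally I would verify right-continuity: because a point is covered by a sensor when its distance to the sensor is \emph{less than or equal to} $r_s$, the coverage event includes its threshold radius, so each indicator is right-continuous, and hence so is $p_{cov}$. Taking the finite union of all threshold values over the $2^n$ configurations and sorting them yields the claimed partition $0=r_0<r_1<\cdots<r_m\le \frac{1}{\sqrt{2}}$ with $p_{cov}(n,p,r_s,k)=c_i$ on $[r_{i-1},r_i)$, completing the proof. The quantitative location of the thresholds is not needed here—only their finiteness—so I would keep the geometric witness-point argument at the level of existence rather than computing the critical radii explicitly.
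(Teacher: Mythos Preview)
Your overall strategy matches the paper's: express $p_{cov}$ as the finite sum over the $2^n$ active-sensor configurations $\mathcal{X}_a$, show that for each fixed $\mathcal{X}_a$ the $k$-coverage indicator is a monotone, right-continuous step function of $r_s$, and conclude by collecting the (at most $2^n$) jump points into one finite list.

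Where you diverge is in the sub-argument you flag as the ``main obstacle.'' You propose to locate the threshold $r^*(\mathcal{X}_a)$ geometrically via witness points---vertices of a $k$-th order Voronoi diagram and their intersections with $\partial S_0$---so as to exhibit a finite candidate set of critical radii. The paper bypasses this entirely with a two-line topological observation: for fixed $\mathcal{X}_a$, the covered region $cov(\mathcal{X}_a,r_s)$ is closed (a finite union of closed sets), hence the uncovered region is open; if it is nonempty at $r_s$, it contains a point whose $k$-th nearest center lies strictly farther than $r_s$, so it remains nonempty on some $[r_s,r_s+\epsilon)$. Together with monotonicity this already yields the single-jump, right-continuous behavior of each indicator, and the $2^n$ bound on configurations immediately bounds the number of discontinuities of the sum. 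No identification of where the thresholds lie is needed. Your Voronoi-style argument is correct and would give a sharper, constructive description of the jump locations, but it is heavier machinery than the theorem requires; the paper's open/closed-set route is the more economical proof of the stated result.
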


\begin{proof}
For simplicity we prove the theorem for $k=1$; the extension to $k>1$ is straightforward. Let
$p_{cov}(r_s)=p_{cov}(n,p,r_s,1)$. We need to show $p_{cov}(r_s)$ ia
a piecewise constant function of $r_s$. It is clear that
$p_{cov}(r_s)$ is a nondecreasing function. In particular we have
$p_{cov}(0)=0$ and $p_{cov}(r_s)=1$ for $r_s \geq
\frac{1}{\sqrt{2}}$. For a point $X$ in the plane, let $circ(X,r)$ be
the closed ball that is centered at $X$ and has radius $r$. Define $cov(X,r_s)$
to  be the area that is covered by a sensor node located at $X$ with sensing radius $r_s$.
In other words, $cov(X,r_s)$ is
the portion of $circ(X,r_s)$ that lies within the deployment region.
Again assume that $\mathcal{X}=\{X_1,X_2,...,X_k\}$ is
the set of points in the sensor grid and $\mathcal{X}_a \subset \mathcal{X}$ is the set of active
sensors. Define
\begin{equation}\label{}
    cov(\mathcal{X}_a,r_s)=\bigcup_{X \in \mathcal{X}_a} cov(X,r_s).
\end{equation}
Thus the unit square $S_0$ is completely covered whenever
$cov(\mathcal{X}_a,r_s)=S_0$. Let $0 \leq r_s \leq \frac{1}{\sqrt{2}}$. If
$cov(\mathcal{X}_a,r_s)=S_0$, then for all $r>r_s$, we have
$cov(\mathcal{X}_a,r)=S_0$.  On the other hand, we prove that if
$cov(\mathcal{X}_a,r_s)\neq S_0$, there exists $\epsilon > 0$ such that for all $r \in
[r_s,r_s+\epsilon )$ we have $cov(\mathcal{X}_a,r)\neq S_0$. To prove this note that the
covered area $cov(\mathcal{X}_a,r_s)$ is a closed set because it is
the union of a finite number of closed sets. Thus, the uncovered
area  is an open set. Hence, to cover the uncovered area,
the sensing radius $r_s$ must increase by a strictly positive
amount.

We now prove that for any $r_s$, there exists a strictly positive
$\epsilon$ such that $p_{cov}(r)$ remains constant as the sensing radius $r$
varies within $[r_s,r_s+\epsilon)$. Define
\begin{equation}\label{eq:xa}
\begin{split}
    \mathbf{X}_{a}^{r_s}=\{ \mathcal{X}_a :
    cov(\mathcal{X}_a,r_s)=S_0\}, \quad \text{and}\quad
    \mathbf{\overline{X}}_{a}^{r_s}=\{ \mathcal{X}_a :
    cov(\mathcal{X}_a,r_s)\neq S_0\}.
\end{split}
\end{equation}
Note that $\mathbf{X}_{a}^{r_s}$ and $\mathbf{\overline{X}}_{a}^{r_s}$ are finite sets. Using (\ref{eq:xa}) we have
\begin{equation}\label{equ:pcov}
p_{cov}(r_s)= \sum_{\mathcal{X}_a \in \mathbf{X}_{a}^{r_s}}
p(\mathcal{X}_a).
\end{equation}
For any $\mathcal{X}_a \in \mathbf{\overline{X}}_{a}^{r_s}$ define $\epsilon(\mathcal{X}_a)=\textrm{min}\{r^\prime \ \textrm{s.th.} \ cov(\mathcal{X}_a,r_s+r^\prime)= S_0\}$ and let $\epsilon=\textrm{min} \{\epsilon(\mathcal{X}_a): \mathcal{X}_a \in
\mathbf{\overline{X}}_{a}^{r_s} \}$. Then $\epsilon>0$. Further,
for all $r \in [r_s,r_s+\epsilon)$, we have
$\mathbf{\overline{X}}_{a}^{r}=\mathbf{\overline{X}}_{a}^{r_s}$.
Thus we conclude that for all $r \in [r_s,r_s+\epsilon)$, we have
$\mathbf{X}_{a}^{r}=\mathbf{X}_{a}^{r_s}$. Using (\ref{equ:pcov}) we
conclude that $p_{cov}(r_s)$ does not change as $r$ varies in
$[r_s,r_s+\epsilon)$. This proves that $p_{cov}(r_s)$ is a right-continuous
piecewise constant function.

It remains to show that the number of discontinuities is finite. This follows easily from the fact that the number of
$\mathcal{X}_a$'s is finite. Note that by (\ref{equ:pcov}), any
discontinuity occurs when the set $\mathbf{X}_{a}^{r_s}$ changes
due to an increase in $r_s$. However, $\mathbf{X}_{a}^{r_s}$ can
have at most $2^n$ elements. Further, at each discontinuity, at least one
element is added to $\mathbf{X}_{a}^{r_s}$. This implies that the
number of discontinuities is upper-bounded by $2^n$. It is worth noting that
in practice, the number of discontinuities is much smaller than
$2^n$. This completes the proof.
\end{proof}

Theorems \ref{thm:piecegraph} and \ref{thm:piececoverage} determine
the behavior of a vast class of network quantities when they are
considered as functions of communication and sensing radii. In
particular, these are important from the view point of finite sensor
grids. We note that for very large network sizes, the piecewise
constant functions tend to continuous functions. Thus, we do not
observe the discontinuities. However, in such networks as finite sensor grids, this property is noticeable as in Figures
\ref{fig:coverageasymp} and \ref{fig:coverageasympn100p2k2}. We clarify that because the simulation
results are approximations for the actual values, the figures are not
completely piecewise constant. In fact one of the implications of
Theorems \ref{thm:piecegraph} and \ref{thm:piececoverage} is to
simplify simulations since the piecewise constant
functions can be completely determined by knowing their values for
only a finite number of points.
Furthermore, the above results suggest that increasing the
communication and sensing radii does not necessarily improve coverage,
connectivity or any other graph theoretic properties. This is an
important observation for designing the network and choosing its parameters optimally.


\subsection{Bounds on the Coverage Probability}
\label{sec:covanalysis}
We now consider coverage probability for finite sensor grids. We find lower and upper bounds
for $p_{cov}(n,p,r_s,k)$ and show that they can give an acceptable estimate of the coverage probability. Let $N(r,x,y)$ be the number of sensors whose distance from
the point $(x,y)$ is less than or equal to $r$. For example,
$N(r,0.5,0.5)$ denotes the number of sensors whose distance from the
top-right corner of unit square is less than or equal to $r$. We first prove the following lemma.

\begin{lem}\label{lem:lowercovgrid}
Let $L$ be the set of $\sqrt{l}\times \sqrt{l}$ points in a virtual grid on the unit square. Let us also denote by $A(u)$ the event that point $u$ on $L$ is covered by a sensor grid with coverage radius $r_s$. We then have
\begin{align}\label{eq:lowercovgrid}
\text{Pr}(\bigwedge_{u\in L}A(u))\geq \prod_{u\in L} \text{Pr}(A(u)).
\end{align}
\end{lem}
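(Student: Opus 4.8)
The plan is to recognize the inequality (\ref{eq:lowercovgrid}) as an instance of positive association (the FKG/Harris inequality) for the product measure that governs which sensors are active. First I would fix the probability space: a configuration of the grid is completely determined by the subset of active sensors, so the sample space is $\{0,1\}^n$, where the $j$-th coordinate records whether the $j$-th sensor is active, and the measure is the product measure in which each coordinate equals $1$ independently with probability $p$. Equip $\{0,1\}^n$ with the coordinatewise partial order, so that $\omega \leq \omega'$ means every sensor active in $\omega$ is also active in $\omega'$.

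The central observation is that each event $A(u)$ is \emph{increasing} (an up-set) with respect to this order. Indeed, a point $u\in L$ is covered exactly when at least one of the sensors within distance $r_s$ of $u$ is active, and activating additional sensors can never destroy the coverage of $u$. Hence $\mathbf{1}_{A(u)}$ is a monotone nondecreasing function on $\{0,1\}^n$, and any intersection $\bigcap_{u} A(u)$, being an intersection of up-sets, is again increasing.

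With monotonicity in hand, the inequality follows from the Harris inequality: under a product measure, any two increasing events are positively correlated, that is, $\mathrm{Pr}(A\cap B)\geq \mathrm{Pr}(A)\,\mathrm{Pr}(B)$ whenever $A$ and $B$ are increasing. To pass from two events to the whole collection $\{A(u):u\in L\}$, I would induct on $|L|$. Writing $L=\{u_1,\dots,u_m\}$ and $B=\bigcap_{i=2}^{m}A(u_i)$, the set $B$ is increasing, so Harris gives $\mathrm{Pr}(A(u_1)\cap B)\geq \mathrm{Pr}(A(u_1))\,\mathrm{Pr}(B)$; combining this with the inductive hypothesis $\mathrm{Pr}(B)\geq \prod_{i=2}^{m}\mathrm{Pr}(A(u_i))$ then yields (\ref{eq:lowercovgrid}).

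The only real work is the Harris inequality itself, which I would establish by induction on the number of coordinates $n$. For $n=1$, two increasing functions $f,g$ on $\{0,1\}$ satisfy $(f(1)-f(0))(g(1)-g(0))\geq 0$, and expanding $\mathrm{Cov}(f,g)$ shows it is nonnegative. For the inductive step I would condition on the last coordinate: the inner conditional expectations remain increasing functions of the first $n-1$ coordinates, so the inductive hypothesis controls their correlation there, while the one-dimensional case handles the dependence on the last coordinate. This conditioning-and-induction step is the main technical obstacle, although it is entirely standard; applying the resulting inequality to the indicators $f=\mathbf{1}_{A}$ and $g=\mathbf{1}_{B}$ completes the argument.
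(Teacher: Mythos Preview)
Your proof is correct and follows essentially the same route as the paper: both identify each coverage event $A(u)$ as increasing on the product space $\{0,1\}^n$ and then invoke positive association to obtain the product lower bound, iterating from two events to all of $L$. The only cosmetic difference is that the paper cites the general FKG inequality and explicitly verifies the lattice condition $\mu(\mathbf{x})\mu(\mathbf{y})\leq\mu(\mathbf{x}\vee\mathbf{y})\mu(\mathbf{x}\wedge\mathbf{y})$, whereas you appeal to the Harris inequality (the product-measure special case of FKG) and sketch its coordinate-wise inductive proof, which makes your version slightly more self-contained but otherwise identical in substance.
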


\begin{proof}
We use FKG inequality to prove this lemma \cite{FKG}.
We first show that for any two subsets $I$ and $J$ of $L$, we have $\text{Pr}(A(I) \bigwedge A(J))\geq \text{Pr}(A(I)) \times \text{Pr}(A(J))$. Here, $A(I)$ ($A(J)$) is the event that all points in $I$ ($J$) are covered.
Since this is true for any two subsets of $L$, (\ref{eq:lowercovgrid}) can be derived by partitioning $L$ and the resulted components, repeatedly, and then using this property at each step.

First note that we can enumerate the nodes in the sensor grid from 1 to $n$. Accordingly we can show the status of the network with a $n$-tuple binary vector where 0 and 1 are assigned to inactive and active nodes, respectively. Let us denote by $T$ the set of all possible binary $n$-tuples as the network status, i.e. $T=\{\textbf{t}=(t_1, t_2, ..., t_n) \in \{0,1\}^n\}$. $T$ can be then defined as a ``finite distributive lattice" as follows. For $\textbf{x}=(x_1, x_2, ..., x_n)$ and $\textbf{y}=(y_1, y_2, ..., y_n)$ in $T$, we define $\textbf{x}\vee \textbf{y}$ as the elementwise ``or" of $\textbf{x}$ and $\textbf{y}$. That is if  $\textbf{w}=\textbf{x}\vee \textbf{y}=(w_1, w_2, ..., w_n)$ then $w_i=x_i\vee y_i$.
Similarly, we define $\textbf{x}\wedge \textbf{y}$ as the elementwise ``and" of $\textbf{x}$ and $\textbf{y}$, i.e. if  $\textbf{w}=\textbf{x}\wedge \textbf{y}=(w_1, w_2, ..., w_n)$ then $w_i=x_i\wedge y_i$. With these definitions, it is easy to check that $\vee$ and $\wedge$ are distributive over each other. Note that the lattice defined this way is partially ordered as we have $(\textbf{x}\wedge \textbf{y}) \preceq \textbf{x},\textbf{y}\preceq (\textbf{x}\vee \textbf{y})$.

We now define a probability measure $\mu:T \rightarrow \mathbb{R}^+$ as follows. For $\textbf{x}\in T$, $\mu(\textbf{x})=p^k(1-p)^{n-k}$, where $k=\sum_{i=1}^n x_i$. Note that $\mu(\textbf{x})$ in fact indicates the probability that the sensor grid admits the status $\textbf{x}$ with $k$ active sensors and $n-k$ inactive sensors. It is also trivial to verify that $\mu(\textbf{x})\mu(\textbf{y})\leq \mu(\textbf{x}\vee \textbf{y})\mu(\textbf{x}\wedge \textbf{y})$, which is required by FKG inequality. Given two subsets $I$ and $J$ of $L$, we also define functions $f,g:T\rightarrow \mathbb{R}^+$ as follows. For every $\textbf{x}\in T$, $f(\textbf{x})=1$ ($g(\textbf{x})=1$) if $I$ ($J$) is covered. By these definitions, $f$ and $g$ are both \emph{increasing functions} over $T$. Given the lattice $T$, measure $\mu$, and functions $f$ and $g$ as above, the FKG inequality holds as follows.
\begin{align}\label{eq:FKG}
\big(\sum_{\textbf{x}\in T} \mu(\textbf{x})f(\textbf{x})\big) . \big(\sum_{\textbf{x}\in T} \mu(\textbf{x})g(\textbf{x})\big) \leq \big(\sum_{\textbf{x}\in T} \mu(\textbf{x})f(\textbf{x})g(\textbf{x})\big) . \big(\sum_{\textbf{x}\in T} \mu(\textbf{x})\big).
\end{align}
However, $\sum_{\textbf{x}\in T} \mu(\textbf{x})=1$. Furthermore, $\sum_{\textbf{x}\in T} \mu(\textbf{x})f(\textbf{x})$, $\sum_{\textbf{x}\in T} \mu(\textbf{x})g(\textbf{x})$, and $\sum_{\textbf{x}\in T} \mu(\textbf{x})f(\textbf{x})g(\textbf{x})$ are in fact equal to $\text{Pr}(A(I))$, $\text{Pr}(A(J))$, and $\text{Pr}(A(I) \bigwedge A(J))$, respectively. (\ref{eq:FKG}) can thus be rewritten as $\text{Pr}(A(I) \bigwedge A(J))\geq \text{Pr}(A(I)) \times \text{Pr}(A(J))$.
This completes the proof.
\end{proof}

Now, we are ready to prove the lower bound on the coverage probability.

\begin{thm}\label{thm:lowercovgrid}
Consider the coverage probability for a finite sensor grid with parameters $n, p$, and $r_s$. We then have
\begin{equation}\label{eq:lowercovgrid}
p_{cov}(n,p,r_s,1) \geq \prod_{u \in L} [1-(1-p)^{N(r_s^\prime, x_u, y_u)}],
\end{equation}
where $L$ is the virtual grid in Lemma \ref{lem:lowercovgrid}, and the radius $r_s ^\prime$ is given by $r_s-\frac{1}{\sqrt{2l}}$.
\end{thm}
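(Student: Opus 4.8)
The plan is to reduce the continuous coverage event to the finite conjunction of grid-point coverage events that Lemma \ref{lem:lowercovgrid} already handles, paying for this discretization with a small shrinkage of the sensing radius. The reduction rests on one geometric observation, and everything after it is a direct invocation of the lemma plus an elementary independence computation.

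First I would establish the geometric heart of the argument: if every point $u \in L$ is covered when the sensing radius is $r_s^\prime = r_s - \frac{1}{\sqrt{2l}}$, then the entire deployment region $S_0$ is covered when the sensing radius is $r_s$. Since $L$ consists of $\sqrt{l}\times\sqrt{l}$ evenly spaced points on the unit square, the cells of this virtual grid have side $\frac{1}{\sqrt{l}}$, so every point $z \in S_0$ lies within the half-diagonal distance $\frac{1}{\sqrt{2l}}$ of its nearest grid point $u$. If $u$ is covered at radius $r_s^\prime$ by some active sensor $X$, then by the triangle inequality $\|X-z\| \le \|X-u\| + \|u-z\| \le r_s^\prime + \frac{1}{\sqrt{2l}} = r_s$, so $z$ is covered at radius $r_s$. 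As $z \in S_0$ was arbitrary, the simultaneous occurrence of the events $A(u)$ (coverage of every grid point at radius $r_s^\prime$) forces $cov(\mathcal{X}_a,r_s)=S_0$. Consequently
\begin{equation*}
p_{cov}(n,p,r_s,1) \ge \text{Pr}\Big(\bigwedge_{u\in L} A(u)\Big),
\end{equation*}
where here $A(u)$ denotes the event that $u$ is covered at sensing radius $r_s^\prime$.

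Next I would apply Lemma \ref{lem:lowercovgrid} to the events $A(u)$ (invoked with coverage radius $r_s^\prime$) to lower-bound the joint probability by the product $\prod_{u\in L}\text{Pr}(A(u))$. It then remains to evaluate each factor. The point $u=(x_u,y_u)$ is covered at radius $r_s^\prime$ precisely when at least one of the sensors lying within distance $r_s^\prime$ of $u$ is active; by definition there are $N(r_s^\prime,x_u,y_u)$ such sensors, each active independently with probability $p$. The probability that none of them is active is $(1-p)^{N(r_s^\prime,x_u,y_u)}$, so $\text{Pr}(A(u)) = 1-(1-p)^{N(r_s^\prime,x_u,y_u)}$. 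Substituting this into the product yields exactly the claimed bound.

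I expect the main obstacle to lie in the geometric step rather than the probabilistic one. The crux is verifying that the specific choice $r_s^\prime = r_s - \frac{1}{\sqrt{2l}}$ is exactly what the covering radius of an $\sqrt{l}\times\sqrt{l}$ net demands—that $\frac{1}{\sqrt{2l}}$ is the true half-diagonal worst-case distance—and that the triangle inequality faithfully transfers coverage from the discrete net to every point of the continuum. Once that containment of events is in place, the FKG step is a black-box application of the preceding lemma and the per-point probability is immediate from independence of the node failures.
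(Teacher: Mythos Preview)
Your proposal is correct and follows essentially the same approach as the paper: reduce full coverage at radius $r_s$ to coverage of the virtual grid $L$ at the shrunken radius $r_s'$, apply Lemma~\ref{lem:lowercovgrid} (FKG) to factor the joint probability, and compute each factor as $1-(1-p)^{N(r_s',x_u,y_u)}$. The only cosmetic difference is that the paper cites an external lemma (Lemma~3.1 of \cite{kcoverage2}) for the geometric reduction, whereas you prove it directly via the triangle inequality and the half-diagonal bound $\tfrac{1}{\sqrt{2l}}$; the content is the same.
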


\begin{proof}
First note that the choice of the virtual grid $L$ and its size, $l$, is arbitrary. As a result, for any given $r_s$, we choose $l$ large enough such that $r_s-1/\sqrt{2l} >0$.
To prove this theorem, we make use of some results in \cite{kcoverage2}. Lemma 3.1 in \cite{kcoverage2} states that for a given set of points $L$ that consists of all grid points of a $\sqrt{l}\times \sqrt{l}$ virtual grid on a unite square, if $L$ is covered by a network of radius $r_s ^\prime$, the unit square is covered by the same network but with the radius $r_s=r_s ^\prime+\frac{1}{\sqrt{2l}}$. Hence,  $p_{cov}(n,p,r_s,1) \geq \text{Pr}(L \ \text{covered})$. Now we use Lemma \ref{lem:lowercovgrid} above to prove the lower bound.
Let us denote by $A(u)$ the event that point $u$ is covered by a network with coverage radius $r_s ^\prime$. By Lemma \ref{lem:lowercovgrid} we have $\text{Pr}(\bigwedge_{u\in L}A(u))\geq \prod_{u\in L} \text{Pr}(A(u))$.
Now, note that the probability that a point $u$ with coordination $(x_u, y_u)$ is covered by the set of $n$ nodes with coverage radius $r_s ^\prime$ is given by $[1-(1-p)^{N(r_s^\prime, x_u, y_u)}]$. Thus, we write $p_{cov}(n,p,r_s,1) \geq \text{Pr}(L \  \text{covered})=\text{Pr}(\bigwedge_{u\in L}A(u)) \geq \prod_{u\in L} \text{Pr}(A(u))= \prod_{u\in L} [1-(1-p)^{N(r_s^\prime, x_u, y_u)}]$.
\end{proof}

Now, we prove an upper bound for the coverage probability.

\begin{thm} \label{thm:uppercov}
Consider sensor grids with parameters $n$, $p$, $r_s$. Then the coverage probability is upper bounded by
\begin{align}\label{eq:uppcov}
\nonumber p_{cov}(n,p,r_s,1) \leq [1-(1-p)^{N(r_s,.5,.5)}]^{4}
&\times [1-(1-p)^{N(r_s,.5,0)}]^{4 \lfloor \frac{(1-2r_s)}{2 r_s}\rfloor}\\
&\times [1-(1-p)^{N(r_s,0,0)}]^{{\lfloor \frac{(1-2r_s)}{2
r_s}\rfloor}^2},
\end{align}
where $\lfloor x \rfloor$ denotes the largest integer less than or equal to $x$.
\end{thm}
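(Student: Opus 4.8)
The plan is to exploit the observation that complete coverage of the unit square is a \emph{stronger} event than coverage of any finite collection of isolated points, so that pinning down a well-chosen set of points and demanding each be covered yields an upper bound on $p_{cov}$. Concretely, if every point in a selected set $P$ must be covered whenever the unit square is covered, then $p_{cov}(n,p,r_s,1)\leq \text{Pr}(\text{all points of }P\text{ are covered})$. The three factors in \eqref{eq:uppcov} suggest that $P$ consists of points of three types: the four corners of the square, a set of points along the four edges, and a set of interior points, with the point $(x,y)$ contributing the factor $[1-(1-p)^{N(r_s,x,y)}]$ since $1-(1-p)^{N(r_s,x,y)}$ is exactly the probability that at least one of the $N(r_s,x,y)$ sensors within distance $r_s$ of $(x,y)$ is active.

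First I would fix the geometry. The corners $(\pm\tfrac12,\pm\tfrac12)$ give four points with coverage probability $[1-(1-p)^{N(r_s,.5,.5)}]$ each, explaining the exponent $4$ on the first factor. Next I would place points along each edge spaced $2r_s$ apart; counting how many such edge-interior points fit between the corners yields $\lfloor (1-2r_s)/(2r_s)\rfloor$ points per edge, hence $4\lfloor (1-2r_s)/(2r_s)\rfloor$ points of ``edge type,'' each represented by a translate of $N(r_s,.5,0)$ by symmetry. Similarly, a $2r_s$-spaced interior lattice supplies $\lfloor (1-2r_s)/(2r_s)\rfloor^2$ points of ``interior type,'' each contributing $[1-(1-p)^{N(r_s,0,0)}]$. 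The spacing $2r_s$ is the crucial design choice: it must be large enough that the chosen points are far enough apart so their coverage events are \emph{mutually independent}, while the symmetry of the grid guarantees that all points of a given type share the same value of $N$.

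The key analytic step is then to justify multiplying the per-point probabilities. Two point-coverage events $A(u)$ and $A(v)$ are independent precisely when no single sensor lies within distance $r_s$ of both $u$ and $v$, i.e. when the discs $circ(u,r_s)$ and $circ(v,r_s)$ enclose disjoint sensor sets. If $u$ and $v$ are separated by at least $2r_s$, these discs are disjoint, so the events depend on disjoint blocks of the independent active/inactive sensor indicators and therefore factorize. Hence $\text{Pr}(\text{all points of }P\text{ covered})=\prod_{u\in P}[1-(1-p)^{N(r_s,x_u,y_u)}]$, which after grouping by type gives exactly the right-hand side of \eqref{eq:uppcov}.

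The main obstacle I anticipate is verifying the independence claim rigorously at the boundary, together with the combinatorial counting of how many $2r_s$-separated points genuinely fit along the edges and in the interior while maintaining pairwise separation $\geq 2r_s$ from \emph{all} other chosen points (not merely neighbours within the same type); one must check, for instance, that edge points are also at distance $\geq 2r_s$ from the corners and from the nearest interior points. Establishing the floor counts $\lfloor (1-2r_s)/(2r_s)\rfloor$ and confirming that the resulting point set is globally $2r_s$-separated is the delicate bookkeeping on which the clean product form depends; once that is in place, the bound follows immediately from disjointness of the sensor discs and independence of the sensor activity variables.
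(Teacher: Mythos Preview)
Your proposal is correct and follows essentially the same approach as the paper: select a finite set of points pairwise separated by more than $2r_s$ so that their coverage events are mutually independent (each depending on a disjoint block of the Bernoulli activity indicators), and then upper-bound $p_{cov}$ by the product of the per-point coverage probabilities, grouped into corner, edge, and interior types. The paper handles the explicit placement and the counts $\lfloor (1-2r_s)/(2r_s)\rfloor$ by reference to a figure, while you rightly flag the global $2r_s$-separation bookkeeping (corners vs.\ edge points vs.\ interior points) as the only place requiring care.
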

\begin{proof}
Let $X_1=(x_1,y_1), X_2=(x_2,y_2),..., X_m=(x_m,y_m)$ be $m$ points
on the deployment region $S_0$. Assume that $d(X_i,X_j)>2r_s$ for
$i\neq j$, where $d(.,.)$ is the Euclidean distance between the
points. Then the event that $X_i$ is covered is independent of the
event that $X_j$ is covered. This is because there is no sensor node
that can cover both points. Hence, the probability
that all $X_i$'s are covered is given by
\begin{equation}\label{eq:upcoveval}
    \prod_{i=1}^{m}[1-(1-p)^{N(r_s,x_i,y_i)}].
\end{equation}
This implies that $p_{cov}(n,p,r_s,1)$ is upper bounded by
$\prod_{i=1}^{m}[1-(1-p)^{N(r_s,x_i,y_i)}]$. Thus using any set of
points on the plane that satisfy $d(X_i,X_j)>2r_s$, we can find an
upper bound for $p_{cov}(n,p,r_s,1)$. In particular, considering the
set of points given by Figure \ref{fig:upperboundcov}, we obtain the
upper bound in (\ref{eq:uppcov}).
\end{proof}

\begin{figure}[h]
\centering
\includegraphics[width=3 in, height=2.7 in]{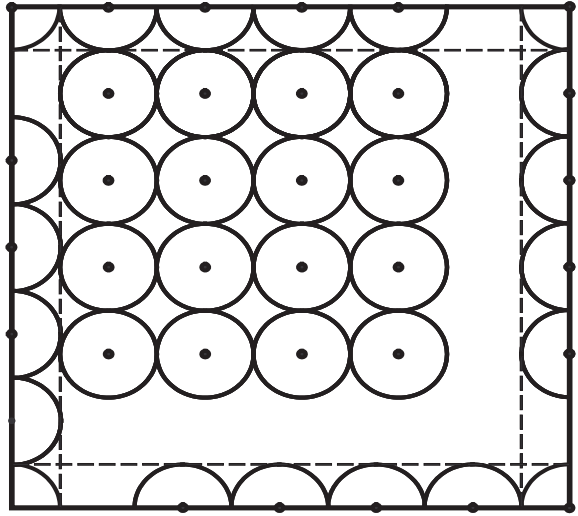}
\caption{Location of the points used for the upper bound of theorem \ref{thm:uppercov}. The centers of the circles are points $X_1=(x_1,y_1), X_2=(x_2,y_2),..., X_k=(x_k,y_k)$.}
 \label{fig:upperboundcov}
\end{figure}

Note that the choice of $X_i$'s in the proof ensures that we consider the edge effects. In
fact, in many situations, the coverage probability is dominated by
the first and second terms in (\ref{eq:uppcov})
which are related to edge effects. One may suggest that using a triangular grid, instead of non-overlapping balls, can result in a more dense packing and consequently a better bound. However, using a triangular grid results in fewer nodes on the sides of the square. We have evaluated (\ref{eq:upcoveval}) for the triangular grid as well as some other more complicated layouts. It turned out that the resulting bound is looser for the triangular grid. Moreover, there is only a negligible improvement by using other layouts at the expense of a more complicated expression compared to (\ref{eq:uppcov}). It is also worth noting that
$N(r_s,.5,.5)$, $N(r_s,.5,0)$, and $N(r_s,0,0)$ introduce
discontinuities in the upper bound as predicted by Theorem
\ref{thm:piececoverage}.

Figure \ref{fig:coveragefinite} compares the results obtained by Theorems \ref{thm:lowercovgrid} and \ref{thm:uppercov} and the simulations for $n=100$ and $p=0.2$.
We observe that Theorems \ref{thm:lowercovgrid} and \ref{thm:uppercov} provide
significantly better estimates of coverage probability compared with
the asymptotic analysis in Figure \ref{fig:coverageasymp}.
The asymptotic behavior of these bounds can be checked by letting $n$ to grow large. The derivation of the lower bound employs a similar argument as in the case of Lemma 4.1 in \cite{kcoverage2}. It can be checked that this bound leads to the same asymptotic expression as in Theorem \ref{KLB}, hence it is tight asymptotically.
On the other hand, when $n$ gets large, we can reasonably expect the same situation as in the upper bound of Theorem \ref{thm:uppercov}. That is the terms corresponding to the virtual nodes (nodes on the virtual grid) on the corner and close to the edges will be dominant in the lower bound of (\ref{eq:lowercovgrid}). This is true because there are fewer sensor nodes around these virtual nodes to cover them, causing the coverage probability $ \text{Pr}(A(u))$ for these virtual nodes to decay faster than the rest of the virtual nodes.
Regarding the asymptotic behavior of the upper bound of (\ref{eq:uppcov}), it can be verified that if $\frac{np\pi r^2}{\log(np)}<1-\epsilon$ as $n$ tends to infinity, then the upper bound will be $o(1)$.

We also like to talk about the time complexity of computing the bounds in Theorems \ref{thm:lowercovgrid} and \ref{thm:uppercov}. The upper bound of Theorem \ref{thm:uppercov} can be computed in time $O(n)$. This is because we need to find the neighbors for $1/(2r_s )^2$ points, and finding the number of neighbors for each point takes a constant amount of time. However, if $1/(2r_s )^2 >n$, then $p_{cov}=0$. Thus, the complexity is $O(1/(2r_s )^2 )=O(n)$. For the lower bound in Theorem \ref{thm:lowercovgrid}, note that we only need to find the number of neighbors for every node of $L$. Given the sensor grid and the virtual grid $L$, finding the number of neighbors for each node of $L$ takes a constant amount of time. Also note that for $r_s^{\prime}$ to be positive, $L$ needs to contain more nodes than the sensor grid, hence, $l>n$. Therefore, the lower bound can be computed with complexity $O(l)$.

Theorems \ref{thm:lowercovgrid} and \ref{thm:uppercov} can be easily
generalized for $k$-coverage. Since the proof is very similar, we just
state the result in one theorem.

\begin{figure}[t]
\centering
\includegraphics[width=4 in, height=3 in]{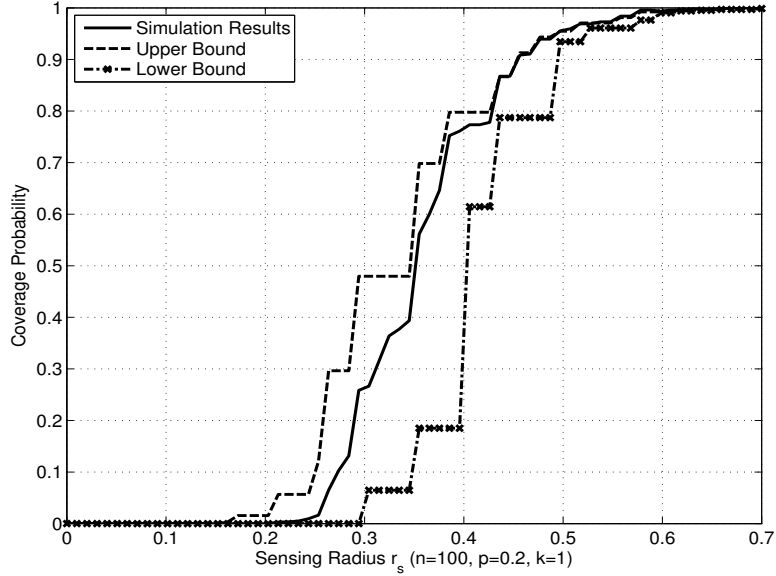}
\caption{Comparison of finite-size analysis with the simulation
results for the coverage probability of sensor grids with $n=100$
and $p=0.20$.}
\label{fig:coveragefinite}
\end{figure}

\begin{thm} \label{thm:upperkcov}
Consider the $k$-coverage probability for a sensor grid with parameters $n$, $p$, and $r_s$, and assume that $L$ and $r_s ^\prime$ are as defined in Theorem \ref{thm:lowercovgrid}. Then we have
\begin{equation}
p_{cov}(n,p,r_s,k) \geq \prod_{u \in L} [1-\sum_{i=0}^{k-1}{N(r_s ^\prime,x_u,y_u)\choose i}p^i(1-p)^{N(r_s ^\prime,x_u,y_u)-i}],
\end{equation}
and
\begin{align}\label{eq:uppkcov}
\nonumber p_{cov}(n,p,r_s,k) \leq & [1-\sum_{i=0}^{k-1}{N(r_s,.5,.5)\choose i}p^i(1-p)^{N(r_s,.5,.5)-i}]^{4}
\times \\ \nonumber & [1-\sum_{i=0}^{k-1}{N(r_s,.5,0)\choose i}p^i(1-p)^{N(r_s,.5,0)-i}]^{4 \lfloor \frac{(1-2r_s)}{2 r_s}\rfloor}
\times \\ & [1-\sum_{i=0}^{k-1}{N(r_s,0,0)\choose i}p^i(1-p)^{N(r_s,0,0)-i}]^{{\lfloor
\frac{(1-2r_s)}{2 r_s}\rfloor}^2}.
\end{align}
\end{thm}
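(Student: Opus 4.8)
The plan is to reduce both inequalities to the $k=1$ arguments already established in Theorems \ref{thm:lowercovgrid} and \ref{thm:uppercov}, replacing the single-point coverage probability $1-(1-p)^{N}$ by its $k$-coverage analogue. The one genuinely new computation is the probability that a fixed point $u=(x_u,y_u)$ is covered by at least $k$ active sensors. Since exactly $N$ sensors lie within distance $r_s$ of $u$ and each is active independently with probability $p$, the number of active sensors capable of covering $u$ is a $\mathrm{Binomial}(N,p)$ random variable. Hence $u$ is $k$-covered precisely when at least $k$ of them are active, giving
\[
\Pr(u \text{ is } k\text{-covered}) = 1-\sum_{i=0}^{k-1}\binom{N}{i}p^i(1-p)^{N-i},
\]
where $N=N(r_s,x_u,y_u)$. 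This is exactly the bracketed factor appearing in both bounds, and it collapses to $1-(1-p)^{N}$ when $k=1$.

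For the lower bound I would follow the proof of Theorem \ref{thm:lowercovgrid} verbatim. First invoke the $k$-coverage version of Lemma 3.1 of \cite{kcoverage2}: if the $\sqrt{l}\times\sqrt{l}$ virtual grid $L$ is $k$-covered with radius $r_s'=r_s-1/\sqrt{2l}$, then the whole unit square is $k$-covered with radius $r_s$, so that $p_{cov}(n,p,r_s,k)\geq \Pr(L \text{ is } k\text{-covered with radius } r_s')$. Let $A_k(u)$ be the event that $u$ is $k$-covered with radius $r_s'$. The crucial observation is that $A_k(u)$ remains an \emph{increasing} event on the lattice $T$ of Lemma \ref{lem:lowercovgrid}: turning additional sensors on can only increase the number covering $u$, so the indicator of $A_k(\cdot)$ is still monotone. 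The FKG argument of Lemma \ref{lem:lowercovgrid} therefore applies unchanged, yielding $\Pr(\bigwedge_{u\in L}A_k(u))\geq \prod_{u\in L}\Pr(A_k(u))$; substituting the binomial-tail expression above with $N=N(r_s',x_u,y_u)$ gives the stated lower bound.

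For the upper bound I would reuse the packing argument of Theorem \ref{thm:uppercov}. Choose the points $X_1,\dots,X_m$ of Figure \ref{fig:upperboundcov}, which satisfy $d(X_i,X_j)>2r_s$ for $i\neq j$. This separation guarantees that no single sensor lies within $r_s$ of two of these points, so the sets of sensors able to cover distinct $X_i$ are pairwise disjoint; since sensors activate independently, the events ``$X_i$ is $k$-covered'' are mutually independent. Because $k$-coverage of the whole square forces $k$-coverage of each $X_i$, we obtain $p_{cov}(n,p,r_s,k)\leq \prod_{i=1}^{m}\Pr(X_i \text{ is } k\text{-covered})$, and inserting the binomial-tail factor for the four corner points, the edge points, and the interior points, with multiplicities $4$, $4\lfloor (1-2r_s)/2r_s\rfloor$, and $\lfloor (1-2r_s)/2r_s\rfloor^2$ exactly as in Theorem \ref{thm:uppercov}, produces (\ref{eq:uppkcov}).

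The technical content is light, since both bounds inherit their structure from the $k=1$ case; the main thing to verify is that the two structural facts driving those proofs survive the generalization, namely that $k$-coverage of a point is still a monotone event (so FKG applies for the lower bound) and that the disjointness of covering-sensor sets still forces independence (for the upper bound). Both hold for exactly the same geometric reasons as when $k=1$, which is why the proof is essentially identical. The only external ingredient that must be confirmed is that Lemma 3.1 of \cite{kcoverage2} is available for $k$-coverage and not merely for $1$-coverage, and this is the one place I would check the reference carefully.
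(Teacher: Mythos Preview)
Your proposal is correct and follows exactly the route the paper intends: the paper does not give a separate proof but simply states that the argument is ``very similar'' to Theorems \ref{thm:lowercovgrid} and \ref{thm:uppercov}, and your write-up spells out precisely that generalization (binomial tail in place of $1-(1-p)^N$, monotonicity of $k$-coverage for FKG, disjoint covering-sensor sets for independence). Your caveat about checking that Lemma~3.1 of \cite{kcoverage2} applies to $k$-coverage is the only point to verify, and it does hold since enlarging each sensing disk by $1/\sqrt{2l}$ preserves the count of covering sensors at every point of the square.
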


\vspace{.2 in}
Figure \ref{fig:coveragefiniten400p2k2} compares the results obtained by
Theorem \ref{thm:upperkcov} and the simulations for $k=2$, $n=100$, and
$p=0.2$. We observe that the two results are very close.

\begin{figure}[t]
\centering
\includegraphics[width=4 in, height=3 in]{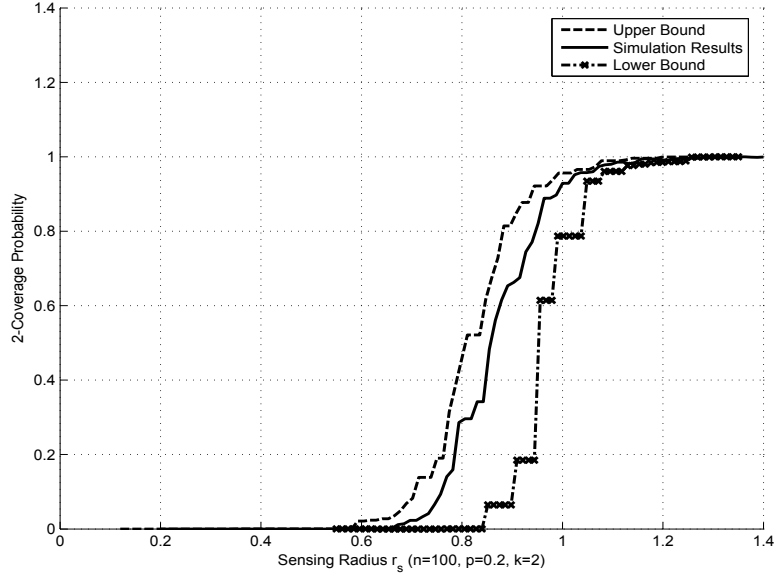}
\caption{Comparison of finite-size analysis with the simulation
results for the 2-coverage probability ($k=2$) of sensor grids with
$n=100$ and $p=0.20$.}
\label{fig:coveragefiniten400p2k2}
\end{figure}

\section{Small-Scale Analysis for Random Sensor Networks}\label{sec:general}
In this section, we try to establish a framework for analysis of
finite sensor networks with random node deployment.
As we mentioned earlier, the exact
analysis of network properties is usually very difficult or at least results in very
complicated formulas. Thus, we will try to find simple lower and
upper bounds which are sufficiently close together that can be used to find a good estimate of the exact value of the desired property. Here, we consider coverage and connectivity in finite sensor networks.

\subsection{Preliminaries}\label{sec:formPrelim}
We consider a wireless sensor network that consists of $n$ nodes and
assume that the nodes are placed on a plane based on a given
probability distribution. For example, in wireless sensor networks
it is usually assumed that the nodes are randomly and uniformly
deployed over a given field \cite{akyildiz02survey}. We assume that
each node has a fixed communication radius. Two nodes are
connected (can communicate with each other) if they are within
communication range of each other. Throughout the paper, we assume
$\mathcal{B(\mathbb{R}^{\mathrm{2}})}$ is the Borel $\sigma-$algebra
on $\mathbb{R}^2$ and $m$ is the Lebesgue measure on
$\mathcal{B(\mathbb{R}^{\mathrm{2}})}$. Note that we just use
measure theoretic definitions to take care of technicalities but it
is not necessary for the reader to be familiar with them. The reader
can simply assume that for a set $F$ in $\mathbb{R}^{2}$, $m(F)$ is
the area of $F$. $B(X,R)$ is the closed ball
with radius $R$ centered at $X$ in $\mathbb{R}^{2}$.
$S(X,L)$ is the closed square with side $L$
centered at $X$ in $\mathbb{R}^{2}$. In particular
$S_0=S(O,1)$ is the closed square with unit
area centered at the origin. If $u$ and $v$ are two nodes of a
network located in $\mathbb{R}^{\mathrm{2}}$, then $d(u,v)$ is the
Euclidean distance between the location of the points. For any set
$F\in \mathcal{B(\mathbb{R}^{\mathrm{2}})}$ we define
$\nu(F)=m(F\cap S_0)$. Clearly, $\nu$ defines a measure on
$\mathcal{B(\mathbb{R}^{\mathrm{2}})}$. 

Wireless networks are sometimes modeled with the probability space
of graphs that we represent with $g(n,r)=g(n,r(n))$. In this model,
it is assumed that $n$ nodes are uniformly and randomly distributed
over $S_0=S(O,1)$. If two nodes $u$ and $v$
satisfy $d(u,v)\leq r(n)$, then the edge $\{u,v\}$ belongs to edges
of the graph. A more general model is the model $g(n,r,p)$, in which
two nodes are connected with probability $0<p\leqslant 1$ if their
distance is less than $r$.
In this model $p$ models link failures
that are common in wireless networks.
Note that here we are using $p$ as a different notation from the previous section.
 Asymptotic properties of
$g(n,r)$ have been studied extensively. Here we are interested in these properties when $n$ is not necessarily
large. It is worth noting that the assumption that the nodes are distributed on a square is made for simplicity.
These arguments can easily be generalized to other models for the deployment region as well as the case where
nodes are distributed non-uniformly over the deployment region. For the purpose of analysis, we divide the
square $S_0$ to different parts shown in Figure \ref{square}.

\begin{figure}[t]
\centering
\includegraphics[width=.5 \columnwidth,angle=0]{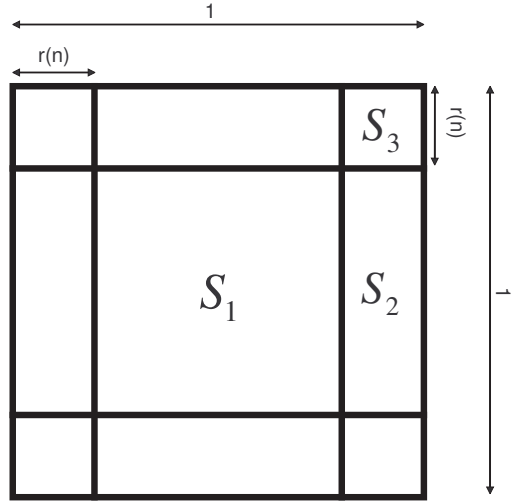}
\caption{The field $S_0$ and its subdivisions $S_1$,$S_2$, and
$S_3$.}
 \label{square}
\end{figure}

\subsection{Asymptotic versus Finite Analysis} \label{sec:need}
In this section, we present some evidence to show that previous
asymptotic results diverge significantly from actual values for
finite networks. To show this, we consider connectivity.
We first
provide the asymptotic probability of disconnectivity for $g(n,r,p)$
and compare it to simulation results. The
following result is proved in \cite{gupta98critical}, where a
slightly different model is considered. However, the results can be
trivially extended to $g(n,r)$.

\begin{thm} (Gupta and Kumar 1998)
Let $c_n=n \pi r^2-\log(n)$, then $g(n,r)$ is connected with high
probability if $\lim \limits_{n \rightarrow \infty} c_n= \infty$. On
the other hand, if $\lim \limits_{n\rightarrow \infty} c_n=c<\infty$
then for large $n$, $g(n,r)$ is disconnected with a strictly
positive probability $1-p_{asymp}(c)$.
\end{thm}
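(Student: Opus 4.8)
The plan is to reduce the connectivity of $g(n,r)$ to the much simpler event that the graph has no isolated vertex, and then to analyze the number of isolated vertices by the method of moments. The guiding principle, standard in the theory of random geometric graphs, is that in this density regime the threshold for connectivity coincides with the threshold for the minimum degree to reach one: with high probability $g(n,r)$ becomes connected at precisely the moment its last isolated vertex disappears. Throughout, note that $c_n = n\pi r^2 - \log n$ being bounded forces $r \sim \sqrt{\log n/(\pi n)}\to 0$, so the discs $B(u,r)$ are small and the local approximations below are legitimate.

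First I would compute the first moment of the number $I_n$ of isolated vertices. Conditioned on the location of a node $u$, the probability that none of the other $n-1$ nodes lands within distance $r$ of it is $(1-\nu(B(u,r)))^{n-1}$. For $u$ in the interior, $\nu(B(u,r))=\pi r^2$, so integrating over $S_0$ gives
\[
\mathbb{E}[I_n]\;\approx\; n(1-\pi r^2)^{n-1}\;\approx\; n\,e^{-n\pi r^2}\;=\;e^{-(n\pi r^2-\log n)}\;=\;e^{-c_n}.
\]
Case 1 ($c_n\to\infty$): then $\mathbb{E}[I_n]\to 0$, and the first-moment method (Markov's inequality) gives $\Pr(I_n\ge 1)\to 0$, so with high probability no vertex is isolated; by the reduction this makes $g(n,r)$ connected with high probability. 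Case 2 ($c_n\to c<\infty$): then $\mathbb{E}[I_n]\to e^{-c}$, and I would upgrade this to a distributional statement---via factorial moments or a Chen--Stein coupling---showing that $I_n$ converges to a Poisson law of mean $e^{-c}$. Hence $\Pr(I_n\ge 1)\to 1-\exp(-e^{-c})$, and since an isolated vertex forces disconnectivity, $g(n,r)$ is disconnected with probability at least $1-\exp(-e^{-c})>0$, i.e. $p_{asymp}(c)=\exp(-e^{-c})$.

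The main obstacle is the reduction itself: proving that the absence of isolated vertices already forces connectivity with high probability. This needs a separate union-bound argument controlling the probability that a connected component of size $k$ with $2\le k\le n/2$ is cut off from the rest; one sums over $k$ a combinatorial-geometric estimate for the chance that $k$ nodes cluster together while their common $r$-neighborhood captures none of the remaining $n-k$ nodes, and shows the total is $o(1)$. A secondary but genuine subtlety is boundary effects: a node near an edge or corner of $S_0$ has $\nu(B(u,r))<\pi r^2$ and is thus likelier to be isolated---exactly the phenomenon exploited in Theorem~\ref{thm:uppercov}---so a careful derivation must split $\mathbb{E}[I_n]$ and its second moment into bulk, edge, and corner contributions. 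These boundary terms vanish on the torus, where the stated threshold is exact, and their reconciliation with the square is precisely the ``trivial extension'' from the slightly different model of \cite{gupta98critical} referred to in the text.
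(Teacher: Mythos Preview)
The paper does not actually prove this theorem: it is quoted as a result of Gupta and Kumar, and the only argument supplied is a short derivation of the explicit value $p_{asymp}(c)=e^{-e^{-c}}$. That derivation goes through a \emph{different} mechanism than yours: the paper invokes Penrose's theorem that $g(n,r)$ is connected if and only if the longest edge $M_n$ of the minimal spanning tree satisfies $M_n\le r$, together with Penrose's limit law $\Pr[n\pi M_n^2-\log n\le \alpha]\to \exp(-e^{-\alpha})$, and simply substitutes $\alpha=c$.

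Your route---reduce connectivity to the absence of isolated vertices, then show by first/second (or factorial) moments that the isolated-vertex count is asymptotically Poisson with mean $e^{-c}$---is the standard direct proof and is correct in outline. What it buys over the paper's citation is self-containment: you do not need the MST machinery, and the same moment computations feed directly into the finite-$n$ bounds the paper develops later (indeed your first-moment integral is exactly the leading term in (\ref{eq:1low}) and (\ref{eq:discon})). What the MST route buys is that the hard step you flag---ruling out non-singleton components of size $2\le k\le n/2$---is packaged inside Penrose's theorem rather than argued from scratch. Your identification of the boundary contributions as the delicate point is apt; that is precisely where the ``slightly different model'' caveat in the text is doing work.
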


This theorem states that if $\lim \limits_{n\rightarrow \infty}
c_n=c<\infty$, the network connectivity probability will be bounded
away from one. In fact, $p_{asymp}(c)$ is the limit for the
probability that the network is connected when $n$ goes to infinity.
To find $p_{asymp}(c)$, Penrose in \cite{penrose97} proved that $g(n,r)$ is connected if and only if  the longest edge of its corresponding Minimal Spanning Tree (MST) is smaller than $r$. On the other hand, if we denote the longest edge of the MST by $M_n$, it is shown in \cite{penrose97} that the distribution of $n \pi {M_n}^2 -\log n$ converges to the double exponential distribution:
\begin{equation}
\lim_{n \to \infty} P[n \pi {M_n}^2 -\log n \leq \alpha]=exp(- e^{-\alpha}) \quad \textrm{for} \ \alpha \in \mathbb{R}.
\end{equation}
Thus we have
\begin{equation}\label{eqn:pasymp}
p_{asymp}(c)=\lim_{n \to \infty} P[M_n \leq r]=\lim_{n \to \infty} P[n \pi {M_n}^2 -\log n \leq n \pi r^2 -\log n]=e^{-e^{-c}}.
\end{equation}
Therefore, asymptotically, the probability that $g(n,r)$ is
connected is given by
\begin{align}
\nonumber p_{asymp}=e^{-ne^{-n \pi r^2}}.
\end{align}
In Figure \ref{fig:asmCompare}, we compare the probability of having
a disconnected graph for $n = 100$ and for both exhaustive
simulations and the asymptotic results. In Figure \ref{fig:asmCompare}, the
probability of disconnectivity is shown as a function of $r$, the
communication radius. The experiment shows that these results may
differ by 10 orders of magnitude. This illustrates that the
asymptotic method fails to provide a good approximation for small-
scale networks.

\begin{figure}[t]
\centering
\includegraphics[width = .65 \columnwidth]{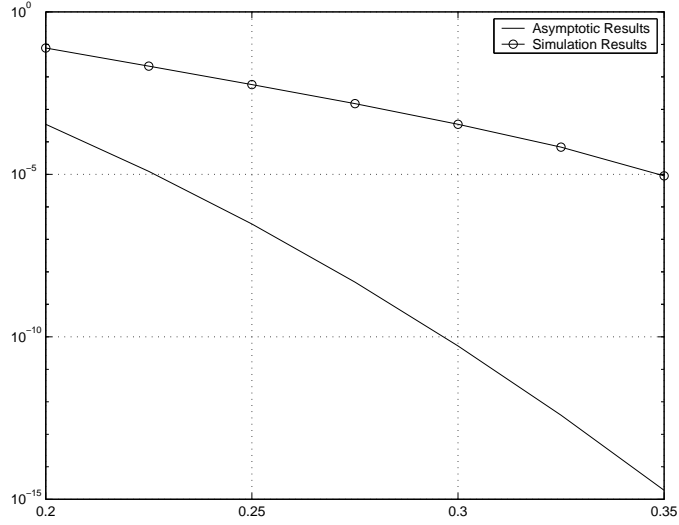}
 \caption{Comparison of asymptotic results with the small scale
simulation results for the probability of disconnectivity of
$g(n=100,r)$.}
 \label{fig:asmCompare}
\end{figure}

A natural question to ask is what makes the results for the asymptotic analysis so different from the finite case? As you can see in Figure ~\ref{square}, $S_0$ is formed by three regions, $S_1$, and boundary regions $S_2$ and $S_3$.
One important phenomenon in asymptotic analysis is that boundary
effects can be neglected. Loosely speaking, the asymptotic analysis of the
network properties is usually dominated by what happens in region
$S_1$ in Figure~\ref{square}. This can considerably simplify the analysis and
results in simple and closed-form formulas for network properties. In fact, we saw an example of this
phenomenon in the asymptotic formula for connectivity in (\ref{eqn:pasymp}).
However, in small-scale networks boundary effects cannot be
neglected. In other words, nodes in the corners of the field can
play an important role in some network properties.

Another important issue in the analysis of finite networks is the effect of constant factors.
In asymptotic analysis, we usually neglect constant factors.
However, in the small-scale analysis, we must consider them. This is in
fact a distinction of any finite analysis from the asymptotic
analysis and is not specific to geometric graphs.

\subsection{Small-Scale Analysis for Coverage}
In this section, we study the coverage probability, $p_{cov}(n,r)$, for finite sensor networks modeled by $g(n,r)$. We prove lower and upper bounds for the coverage probability. We start with the lower bound which gives the worst case performance as well as a guarantee of the coverage probability.

\begin{thm}\label{th:lowcov}
Consider the coverage probability of a sensor network modeled by $g(n,r)$. Then we have
\begin{equation}
p_{cov}(n,r) \geq 1-\sum_{u\in L} [1-\nu(B(X(u),r^\prime))]^n,
\end{equation}
where $L$ is the set of $\sqrt{l}\times \sqrt{l}$ points in a virtual grid on the unit square and the radius $r^\prime$ is given by $r-\frac{1}{\sqrt{2l}}$.
\end{thm}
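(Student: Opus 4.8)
The plan is to follow the same two-stage strategy as the sensor-grid lower bound (Theorem~\ref{thm:lowercovgrid})—reduce coverage of the continuum to coverage of a finite virtual grid, then bound the grid-coverage probability—but to replace the FKG step by an elementary union bound, which is the natural tool for the random-deployment model and produces exactly the additive form appearing in the statement. Since the virtual grid $L$ and its size $l$ are at our disposal, for the given $r$ I would first fix $l$ large enough that $r^\prime = r - 1/\sqrt{2l} > 0$.

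The first step reduces covering $S_0$ to covering the finite point set $L$. Invoking Lemma~3.1 of \cite{kcoverage2}: if every point of the $\sqrt{l}\times\sqrt{l}$ virtual grid $L$ is covered by the network at radius $r^\prime$, then the whole unit square is covered by the same network at radius $r = r^\prime + 1/\sqrt{2l}$. Writing $A(u)$ for the event that the grid point $u$ is covered by the $n$-node network with radius $r^\prime$, this yields
\begin{equation}
p_{cov}(n,r) \geq \text{Pr}(L \text{ is covered with radius } r^\prime) = \text{Pr}\big(\bigcap_{u\in L} A(u)\big).
\end{equation}

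The second step computes the per-point failure probability and applies the union bound. Because each of the $n$ nodes is placed independently and uniformly on $S_0$ (of unit area), the probability that a single node lands within distance $r^\prime$ of $u$, i.e. inside $B(X(u),r^\prime)\cap S_0$, equals $\nu(B(X(u),r^\prime))$; hence by independence $\text{Pr}(\overline{A(u)}) = [1-\nu(B(X(u),r^\prime))]^n$. Applying the union bound to the complementary events and chaining with the reduction above gives
\begin{equation}
p_{cov}(n,r) \geq \text{Pr}\big(\bigcap_{u\in L} A(u)\big) = 1 - \text{Pr}\big(\bigcup_{u\in L}\overline{A(u)}\big) \geq 1 - \sum_{u\in L}[1-\nu(B(X(u),r^\prime))]^n,
\end{equation}
which is the claimed bound. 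There is no serious technical obstacle; the two points demanding care are guaranteeing $r^\prime>0$ through the free choice of $l$, and correctly identifying the single-node hitting probability with the \emph{clipped} area $\nu(B(X(u),r^\prime))$ rather than the full disc area $\pi r^{\prime 2}$—this is precisely where the boundary effects emphasized in Section~\ref{sec:need} enter, since grid points near the edges and corners have smaller hitting probability and therefore dominate the sum. It is worth noting that, in contrast to the grid case, the union bound (not FKG) is used here, trading the multiplicative product for the additive sum that matches the statement.
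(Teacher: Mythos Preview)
Your proposal is correct and follows essentially the same approach as the paper: reduce full coverage to coverage of the virtual grid $L$ at radius $r'$ (via Lemma~3.1 of \cite{kcoverage2}), then apply the union bound to the complementary events $\overline{A(u)}$ with $\text{Pr}(\overline{A(u)})=[1-\nu(B(X(u),r'))]^n$. Your write-up is in fact more explicit than the paper's brief sketch, correctly flagging the choice of $l$ to ensure $r'>0$ and the use of the clipped area $\nu$ to capture boundary effects.
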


\begin{proof}
We briefly describe the proof. As in the case of grid deployment, showing that $L$ is covered guarantees the coverage of the entire region.
Let $A(u)$ be the event that the virtual grid point $u$ is covered. Using union bound, we have
\begin{align}
\text{Pr}(L \ \text{not covered})=\text{Pr}\big[\bigcup_{u\in L}\overline{A(u)}\big] \leq \sum_{u\in L} \text{Pr}(\overline{A(u)})=\sum_{u\in L} [1-\nu(B(X(u),r^\prime))]^n.
\end{align}
Therefore, $p_{cov}(n,r)=1-\text{Pr}(L \ \text{not covered})\geq 1- \sum_{u\in L} [1-\nu(B(X(u),r^\prime))]^n$.
\end{proof}

Now we prove an upper bound for the coverage probability.

\begin{thm}\label{th:uppcovrand}
The coverage  probability of a unit square for a sensor network modeled by $g(n,r)$ has an upper bound given by
\begin{align}\label{eq:uppcovrand}
p_{cov}(n,r)  \leq  [1-(1-\frac{\pi r^2}{4})^n]^{4}
\times  [1-(1-\frac{\pi r^2}{2})^n]^{4 \lfloor \frac{(1-2r)}{2 r}\rfloor}
 \times [1-(1-\pi r^2)^n]^{{\lfloor \frac{(1-2r)}{2 r}\rfloor}^2}.
\end{align}
\end{thm}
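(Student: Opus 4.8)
The plan is to adapt the argument used for the grid upper bound in Theorem~\ref{thm:uppercov} to the random-deployment model $g(n,r)$. First I would fix the same collection of reference points $X_1,\dots,X_m$ used in Figure~\ref{fig:upperboundcov}: four points at the corners of $S_0$, a row of points along each of the four edges, and a rectangular array of interior points, all chosen so that $d(X_i,X_j)>2r$ for every $i\neq j$. The resulting counts are exactly those appearing in \eqref{eq:uppcovrand}, namely $4$ corner points, $4\lfloor (1-2r)/(2r)\rfloor$ edge points, and $\lfloor (1-2r)/(2r)\rfloor^2$ interior points. Since covering the whole square forces every $X_i$ to be covered, we immediately have $\{S_0 \text{ covered}\}\subseteq\bigcap_i A(X_i)$, and therefore $p_{cov}(n,r)\leq \text{Pr}(\bigwedge_i A(X_i))$, where $A(X_i)$ is the event that $X_i$ lies within distance $r$ of at least one of the $n$ nodes.

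Next I would compute the single-point coverage probabilities. Because each of the $n$ nodes is placed independently and uniformly on $S_0$, the chance that a fixed node falls within distance $r$ of $X_i$ equals $\nu(B(X_i,r))$, the area of the portion of the disk $B(X_i,r)$ lying inside $S_0$. Hence $\text{Pr}(\overline{A(X_i)})=(1-\nu(B(X_i,r)))^n$ and $\text{Pr}(A(X_i))=1-(1-\nu(B(X_i,r)))^n$. Elementary geometry gives $\nu(B(X_i,r))=\pi r^2/4$ at a corner (a quarter disk), $\pi r^2/2$ at an edge point (a half disk), and $\pi r^2$ at an interior point (a full disk); these values are exact in the relevant range of $r$, where each such disk is clipped only by the nearest edge(s). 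Substituting them into the three factors reproduces the bases $1-(1-\pi r^2/4)^n$, $1-(1-\pi r^2/2)^n$, and $1-(1-\pi r^2)^n$ of \eqref{eq:uppcovrand}.

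The crux --- and the step that genuinely differs from the grid case --- is to pass from $\text{Pr}(\bigwedge_i A(X_i))$ to the \emph{product} $\prod_i \text{Pr}(A(X_i))$. In the grid model these events were exactly independent, because disjoint sets of independently active sensors governed them; here the randomness lives in the node positions, so the events are not independent but only negatively correlated. I would therefore establish $\text{Pr}(\bigwedge_i A(X_i))\leq \prod_i \text{Pr}(A(X_i))$ directly. The condition $d(X_i,X_j)>2r$ makes the disks $B(X_i,r)$ pairwise disjoint, so the numbers of nodes landing in them form a multinomial vector, which is negatively associated; since each $A(X_i)$ is an increasing event in its own count, negative association yields the product bound. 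Equivalently, one may induct on $m$: for two disjoint regions of areas $a$ and $b$, an inclusion--exclusion computation reduces the claim to $(1-a-b)^n\leq\big((1-a)(1-b)\big)^n$, which is immediate from $ab\geq 0$, and the same step drives the induction. Combining this inequality with the two previous paragraphs gives $p_{cov}(n,r)\leq \prod_i\text{Pr}(A(X_i))$, which is precisely \eqref{eq:uppcovrand}. I expect this negative-correlation step to be the main obstacle, since the clean independence used in Theorem~\ref{thm:uppercov} is no longer available in the random model.
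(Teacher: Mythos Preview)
Your proof follows the same outline as the paper's: choose the reference points of Figure~\ref{fig:upperboundcov}, observe that full coverage forces each of them to be covered, and bound $p_{cov}(n,r)$ by the product of the single-point coverage probabilities, whose values $\pi r^2/4$, $\pi r^2/2$, $\pi r^2$ give the three factors in \eqref{eq:uppcovrand}.

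Where you differ from the paper is in the passage from $\Pr\bigl(\bigcap_i A(X_i)\bigr)$ to $\prod_i \Pr(A(X_i))$. The paper simply says that, ``similar to the proof of Theorem~\ref{thm:uppercov},'' the joint probability ``is given by'' the product --- in effect carrying over the independence argument from the grid case. You correctly point out that this independence does \emph{not} hold in $g(n,r)$: with a fixed number $n$ of i.i.d.\ uniform points, the occupancy counts of the disjoint disks $B(X_i,r)$ are multinomial and hence negatively associated, so the events $A(X_i)$ are only negatively correlated rather than independent. Your negative-association (or two-region induction via $(1-a-b)^n\le(1-a)^n(1-b)^n$) argument supplies exactly the inequality $\Pr\bigl(\bigcap_i A(X_i)\bigr)\le\prod_i\Pr(A(X_i))$ needed for the upper bound. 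In that sense your write-up is more rigorous than the paper's own proof, which glosses over this step; the overall route, however, is the same.
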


\begin{proof}
We adapt the proof of Theorem \ref{thm:uppercov} to prove Theorem \ref{th:uppcovrand}. Consider $k$ points $U_1, U_2, ..., U_k$ on the unit square and assume that these $k$ points are at least apart by $2r$ units from one another. Similar to the proof of Theorem \ref{thm:uppercov}, we can observe that $p_{cov}(n,r)$ is upper bounded by the probability that all the $k$ points are covered which is given by $\prod_{i=1}^{k} (1-[1-\nu(B(U_i,r))]^n)$. Using the set of points depicted in Figure \ref{fig:upperboundcov}, we find the upper bound given by (\ref{eq:uppcovrand}).
\end{proof}

Figure \ref{fig:covrand} compares the bounds predicted by Theorems \ref{th:lowcov} and \ref{th:uppcovrand} with the simulated coverage probability value. Asymptotic result from \cite{kcoverage2} is also presented. Clearly, the bounds are more useful than the asymptotic result in the sense that they give a better estimate of the coverage probability.

\begin{figure}[t]
\centering
\includegraphics[width=4 in, height=3 in]{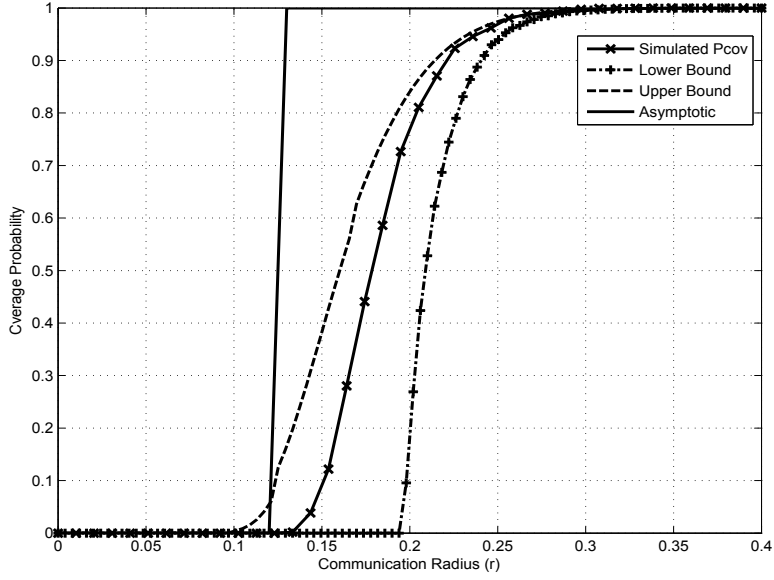}
\caption{Simulation results, and upper and lower bounds of coverage probability for a random wireless network of size $n=100$.}
\label{fig:covrand}
\end{figure}

\subsection{Small-Scale Analysis for Connectivity}
In this section, we study the connectivity properties of finite sensor networks modeled by $g(n,r,p)$. We find lower and upper
bounds for the probability $p_{disc}(n,r,p)$ that $g(n,r,p)$ is disconnected. Let $p_{low}(n,r,p)$ and $p_{upp}(n,r,p)$ be the
lower and upper bounds on $p_{disc}(n,r,p)$, respectively.
Here we consider the case where $p_{disc}(n,r,p)$ is small, i.e.,
$p_{disc}(n,r,p)<0.1$. In practice, this is usually the range that is important, since we want the network to be connected with high
enough probability. Using these bounds, we then provide a simple formula to estimate $p_{disc}(n,r,p)$.
As we will see by simulations, the proposed formula gives a very good estimate for $p_{disc}(n,r,p)$.
First, note that a connected component of a graph $g$ is defined as a connected subgraph that is isolated from the rest of $g$.

\begin{thm}
Consider a wireless sensor network modeled by $g(n,r,p)$. Then we have
\begin{align}\label{eq:1low}
    \nonumber p_{disc}(n,r,p) \geq & n  \int \limits_{S_0} \bigg(1-\nu(B(X,r))p \bigg)^{n-1}d m(X)\\
    \nonumber &-\binom {n}{2} \int \limits_{S_0}  \int \limits_{S_0} \bigg(1-\nu(B(X,r))p-\nu(B(Y,r))p \\
    & +\nu(B(X,r) \cap B(Y,r))p^2 \bigg)^{n-2}d m(X)\times m(Y),
\end{align}
and
\begin{align}\label{eq:1up}
\nonumber p_{disc}(n,r,p)\leq & \sum_{k=1}^{n/2}\binom {n}{k} p_{comp}(\{v_1, v_2, ..., v_k\})= n \int \limits_{S_0} \bigg(1-\nu(B(X,r))p \bigg)^{n-1}d m(X)\\
&+ \sum_{k=2}^{n/2}\binom {n}{k} p_{comp}(\{v_1, v_2, ..., v_k\}),
\end{align}
where $p_{comp}(\{v_1, v_2, ..., v_k\})$ is the probability that the vertices in $\{v_1, v_2, ..., v_k\}$ construct a connected component in $g(n,r,p)$.
\end{thm}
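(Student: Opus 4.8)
The plan is to prove the two bounds separately: the lower bound by counting isolated vertices via a second‑order Bonferroni inequality, and the upper bound by a union bound over all potential small components.

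For the \emph{upper bound} I would begin from the elementary fact that whenever $g(n,r,p)$ is disconnected it must contain a connected component whose vertex set $S$ satisfies $1\le |S|\le \lfloor n/2\rfloor$ (take the smaller side of any split). Thus the event of disconnectivity is contained in the union, over all such vertex subsets $S$, of the events ``$S$ is a connected component.'' Applying the union bound and then invoking the fact that the $n$ node locations are i.i.d.\ uniform on $S_0$ — so that $\text{Pr}[S \text{ is a connected component}]$ depends only on $|S|=k$ and equals $p_{comp}(\{v_1,\dots,v_k\})$ for a canonical representative — yields $p_{disc}(n,r,p)\le \sum_{k=1}^{n/2}\binom{n}{k}p_{comp}(\{v_1,\dots,v_k\})$. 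I would then peel off the $k=1$ term: $p_{comp}(\{v_1\})$ is the probability that a single node is isolated, which, conditioning on its position $X$ and requiring each of the other $n-1$ nodes to fail to link to it (probability $1-\nu(B(X,r))p$ per node, using independence of link failures and uniformity of positions), equals $\int_{S_0}(1-\nu(B(X,r))p)^{n-1}dm(X)$. This gives exactly the displayed first term of (\ref{eq:1up}).

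For the \emph{lower bound} I would use that the presence of any isolated vertex already forces disconnectivity (for $n\ge 2$), so $p_{disc}(n,r,p)\ge \text{Pr}[\bigcup_{i}I_i]$, where $I_i$ is the event that node $i$ is isolated. The second‑order Bonferroni inequality gives $\text{Pr}[\bigcup_i I_i]\ge \sum_i \text{Pr}[I_i]-\sum_{i<j}\text{Pr}[I_i\cap I_j]$. The first sum is the expected number of isolated vertices, equal by the computation above to $n\int_{S_0}(1-\nu(B(X,r))p)^{n-1}dm(X)$ — the same quantity that appears as the leading term of the upper bound. For the pairwise terms I would condition on the positions $X,Y$ of nodes $i,j$: both are isolated iff they are not linked to each other and none of the remaining $n-2$ nodes links to either of them. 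A third node at position $Z$ fails to link to both with conditional probability $(1-p\mathbf{1}[Z\in B(X,r)])(1-p\mathbf{1}[Z\in B(Y,r)])$; integrating this over the uniform position of $Z$ and expanding by inclusion–exclusion on the overlap of the two balls gives $1-\nu(B(X,r))p-\nu(B(Y,r))p+\nu(B(X,r)\cap B(Y,r))p^2$, so the $n-2$ independent remaining nodes contribute the $(n-2)$‑th power of this quantity. The missing edge between $i$ and $j$ would contribute a further factor $1-p\mathbf{1}[d(X,Y)\le r]\le 1$, which I would simply drop; since the pairwise term is \emph{subtracted}, enlarging it by dropping a factor $\le 1$ only decreases the overall bound, keeping the inequality in the correct (lower‑bound) direction. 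Integrating over $X,Y$ and multiplying by $\binom{n}{2}$ then produces the second term of (\ref{eq:1low}).

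The main obstacle is the bookkeeping in the pairwise isolation probability for the lower bound: getting the inclusion–exclusion integrand right over the overlap region $B(X,r)\cap B(Y,r)$, and above all verifying that the relaxation $1-p\mathbf{1}[d(X,Y)\le r]\le 1$ is applied to a quantity that is being subtracted, so that the resulting expression is genuinely a lower bound rather than an upper bound. By contrast, the upper‑bound argument is conceptually routine once the ``smallest component has size at most $n/2$'' observation and the i.i.d.\ symmetry collapsing all $\binom{n}{k}$ equal‑size subsets to a single $p_{comp}$ representative are in place.
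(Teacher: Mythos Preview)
Your proposal is correct and matches the paper's own proof essentially line for line: the lower bound via the second-order Bonferroni inequality on isolated-vertex events, with the pairwise term relaxed by dropping the factor $1-p\mathbf{1}[d(X,Y)\le r]\le 1$ (the paper does this by splitting the $Y$-integral over $Circ(X,r)$ and its complement and then bounding the $(1-p)$ factor by $1$), and the upper bound via the union bound over candidate components of size at most $n/2$, peeling off the $k=1$ isolation term. Your bookkeeping on the integrand $1-\nu(B(X,r))p-\nu(B(Y,r))p+\nu(B(X,r)\cap B(Y,r))p^2$ and the direction of the relaxation is exactly right.
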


\begin{proof}
Let $p_{1}(n,r,p)$ be the probability that there exists at least one
isolated node (a vertex with no neighbors) in $g(n,r,p)$. Let also
$v_1, v_2, ..., v_n$ be the $n$ vertices of $g(n,r,p)$. Then
$p_{disc}(n,r,p)\geq p_{1}(n,r,p)$. Applying the inclusion-exclusion
lemma we obtain
\begin{align}\label{eq:lowcon}
     \nonumber p_{disc}(n,r,p) \geq &
     \nonumber \sum_{k=1}^{n} (-1)^{k+1} \binom {n}{k}\ \textrm{Pr}
     \{v_1, v_2, ..., v_k  \ \textrm{are isolated vertices} \} \\
      \geq &\nonumber  n \ \textrm{Pr}   \{v_1 \ \textrm{is isolated} \}  - \binom
     {n}{2} \textrm{Pr} \nonumber \{v_1 \ \textrm{and} \ v_2 \ \textrm{are isolated} \}.
\end{align}
Note that $\textrm{Pr} \{v_1 \ \textrm{is isolated } \}= \int \limits_{S_0} \bigg(1-\nu(B(X,r))p \bigg)^{n-1}dm(X)$.
Now define $Circ(a,b,r)=\{(x,y): (x-a)^2+(y-b)^2 \leq r^2 \}$.
Then we have
\begin{eqnarray}\label{eq:2isolated}
\nonumber && \textrm{Pr} \{v_1 \ \textrm{and} \ v_2 \ \textrm{are isolated vertices} \} =
\int \limits_{S_0}  \int \limits_{S_0 \setminus Circ(X,r)} \bigg(1-\nu(B(X,r))p-\nu(B(Y,r))p+ \\
\nonumber &&\nu(B(X,r) \cap B(Y,r))p^2 \bigg)^{n-2}d m(X)\times m(Y)+(1-p)\times\\
\nonumber && \int \limits_{S_0}  \int \limits_{Circ(X,r)} \bigg(1-\nu(B(X,r))p-\nu(B(Y,r))p+ \nonumber\nu(B(X,r) \cap B(Y,r))p^2 \bigg)^{n-2}d m(X)\times m(Y)\\
&&\leq \int \limits_{S_0}  \int \limits_{S_0} \bigg(1-\nu(B(X,r))p-\nu(B(Y,r))p+ \nonumber\nu(B(X,r) \cap B(Y,r))p^2
\bigg)^{n-2}d m(X)\times m(Y).
\end{eqnarray}
Combining these equations, we conclude the lower bound.
For the upper bound, note that $p_{disc}(n,r,p)$ is equal to the probability that $g(n,r,p)$
has at least one component of size less than $n/2$. This is given by eq. (\ref{eq:1up}).
\end{proof}

Note that the bounds for $p_{disc}(n,r,p)$ may not satisfy the simplicity requirement. Particularly in the upper bound, except the first few terms,
finding the rest of them is computationally infeasible. We now try to give an estimation of $p_{disc}(n,r,p)$ based on these bounds.
Let us denote the $k$th term in the upper bound by $a_k$.
We recall the assumption that $p_{disc}(n,r,p)$ is not very large, specifically we assumed $p_{disc}(n,r,p)<0.1$. An important observation here is that, by this
assumption, the $a_k$ coefficients decay very fast and, hence, the term $\sum_{k=1}^{n/2} a_k$ is dominated by $a_1$. This
can be seen by both numerical simulations and intuitive analytical arguments. In fact, as it is shown in \cite{Randomgeographspenrose}, as $n$ tends to infinity, the impact of the terms $a_k$, $k>1$ fades.
Figure \ref{neglig} compares $a_1$ and $a_2$ for $g(n=100,r,p=0.5)$. As we see $a_2$ is at least one order of magnitude smaller than $a_1$.
Using the same approach, we find out that a similar argument is true about the first and second terms in the lower bound of eq. (\ref{eq:1low}).
However, the first term is shared by both the lower and upper bounds.
Based on these observations we approximate the probability of disconnectivity as follows.
\begin{equation}\label{eq:discon}
p_{disc}(n,r,p) \simeq n  \int \limits_{S_0}
\bigg(1-\nu(B(X,r))p \bigg)^{n-1}d m(X).
\end{equation}
Figure \ref{conlowup} shows the upper bound, lower bound, and the simulation result for the probability of disconnectivity of
$g(n,r,p)$, for $n=100$, and $p=0.5$. As it can be seen, the three curves almost overlap. Based on our simulations, similar results are achieved if we
use different choices of parameters.

\begin{figure}[t]
\centering
\includegraphics[width=.65 \columnwidth,angle=0]{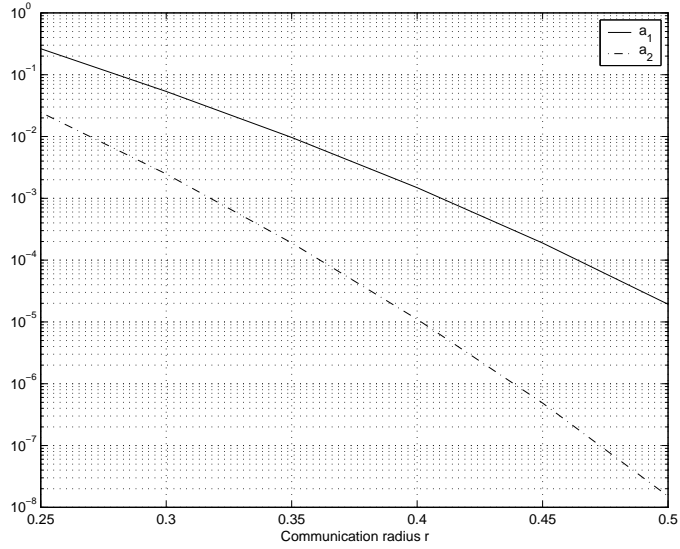}
\caption{Comparison of $a_1$ and $a_2$ in (\ref{eq:1up}).}
 \label{neglig}
\end{figure}

\begin{figure}[t]
\centering
\includegraphics[width=.65 \columnwidth,angle=0]{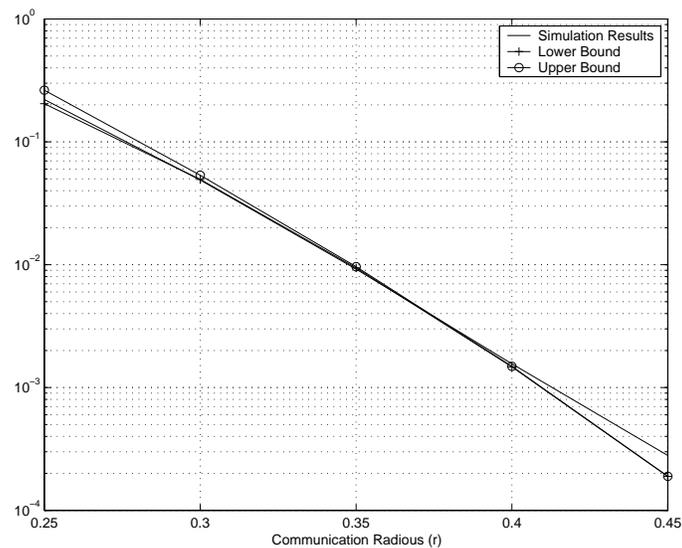}
\caption{Disconnectivity probability of $g(100,r,.5)$: lower bound,
upper bound, and the simulation results.}
 \label{conlowup}
\end{figure}

It is worth noting that the methodology used here can be used to
study $k$-connectivity. In summary, we find the following approximation of the probability that $g(n,r)$ is not $k$-connected
\begin{eqnarray}\label{eqn:summary}
p_{k,disc}(n,r) \simeq \sum_{j=0}^{k-1} n \binom{n}{j}
\int \limits_{S_0}
 [\nu(B(X,r(n)))]^j \times \bigg(1-\nu(B(X,r(n)))\bigg)^{n-j-1}d m(X).
\end{eqnarray}
Our simulations for different values of $k$ confirm the
validity of (\ref{eqn:summary}). Here, due to the space limitations
we omit those results.

\section{Conclusions} \label{sec:conclusion}
In this paper, we took some initial steps towards analyzing finite wireless sensor
networks. We provided some compelling evidence to show that
asymptotic results are not suitable for analyzing practical finite sensor
networks.
We studied connectivity and coverage of finite unreliable sensor grids as a special case. We showed that the connectivity, as well as all the graph theoretic quantities, are piecewise constant functions of the transmission radius in such networks. We also proved that the coverage has a similar behavior. Moreover, we obtained lower and upper bounds for the coverage and $k$-coverage probability of the grids and
verified their preciseness through simulations. Next, we extended our study to finite sensor networks with random node deployment.
Specifically, we considered coverage and connectivity of such networks. We derived lower and upper bounds for their coverage and showed how they can be used to estimate the coverage probability of the network. We also obtained a formula for
connectivity of wireless sensor networks and verified its accuracy through simulations. The formula was then extended to include $k$-connectivity.
A common characteristic of all these bounds is the ease of computations, making them very attractive.

This paper also opens up many research possibilities that offer some potentials
for further study.  In the past, many other important properties
of wireless sensor networks have been studied for large-scale networks. It
is an important task to extend these results for networks with
practical sizes, i.e. small-scale networks. Small-scale analysis can also reveal the effects of network parameters on network characteristics. The next step would be to derive more accurate bounds for network parameters such as coverage, connectivity, and MAC layer capacity and further use the
small-scale framework in the design, analysis, and evaluation of
communication algorithms for wireless networks.

\bibliographystyle{acmsmall}
\bibliography{bib1,bib2,bib3,bib4}

\end{document}